\DeclareMathAlphabet{\mathcal}{OMS}{cmsy}{m}{n}
\spnewtheorem{observation}{Observation}{\bfseries}{\itshape}
\definecolor {remarkcolor} {rgb} {0.1,0.7,0.2}
\let\doendproof\endproof
\renewcommand\endproof{~\hfill\qed\doendproof}
\definecolor {results_green}  {rgb} {0.0,0.6,0.1}
\definecolor {results_red}    {rgb} {0.6,0.0,0.1}
\definecolor {results_orange} {rgb} {0.9,0.3,0.0}
\newcommand {\R} {\mathbb {R}}
\newcommand {\D} {\mathbb {D}}
\begin{document}
\mainmatter

\title {Adjacency-Preserving Spatial Treemaps}

\author{
Kevin Buchin\inst{1}
\and
David Eppstein\inst{2}
\and
Maarten L\"offler\inst{2}
\and \\
Martin N\"ollenburg\inst{3}
\and
Rodrigo I. Silveira\inst{4}
}

\authorrunning{Buchin et al.}

\institute{
Dept. of Mathematics and Computer Science, TU Eindhoven.\\
\and
Dept. of Computer Science, University of California, Irvine.\\
\and
Institute of Theoretical Informatics, Karlsruhe Institute of Technology.\\
\and
Dept. de Matem\`atica Aplicada II, Universitat Polit\`ecnica de Catalunya.\\
}


\maketitle

\vspace {-5pt}
  \begin {abstract}
Rectangular layouts, subdivisions of an outer rectangle into smaller rectangles, have many applications in visualizing spatial information, for instance in rectangular cartograms in which the rectangles represent geographic or political regions. A \emph{spatial treemap} is a rectangular layout with a hierarchical structure: the outer rectangle is subdivided into rectangles that are in turn subdivided into smaller rectangles. We describe algorithms for transforming a rectangular layout that does not have this hierarchical structure, together with a clustering of the rectangles of the layout, into a spatial treemap that respects the clustering and also respects to the extent possible the adjacencies of the input layout.
  \end {abstract}

\section {Introduction}

Spatial treemaps are an effective technique to visualize two-dimensional hierarchical information.
They display hierarchical data by using nested rectangles in a space-filling layout.
Each rectangle represents a geometric or geographic region, which in turn can be subdivided recursively into smaller regions. On lower levels of the recursion, rectangles can also be subdivided based on non-spatial attributes. Typically, at the lowest level some attribute of interest of the region is summarized by using properties like area or color.
Treemaps were originally proposed to represent one-dimensional information in two dimensions~\cite{s-tvtm-92}. However, they are well suited to represent spatial---two-dimensional---data because the containment metaphor of the nested rectangles has a natural geographic meaning, and two-dimensional data makes an efficient use of space~\cite{wd-sot-08}.


Spatial treemaps are closely related to rectangular cartograms~\cite{r-rsc-34}: distorted maps where each region is represented by a rectangle whose area corresponds to a numerical attribute such as population. Rectangular cartograms can be seen as spatial treemaps with only one level; multi-level spatial treemaps in which every rectangle corresponds to a region are also known as \emph{rectangular hierarchical cartograms}~\cite{sdw-chl-09,sdw-rhc-10}.
Spatial treemaps and rectangular cartograms have in common that it is essential to preserve the recognizability of the regions shown~\cite{ks-rc-07}.
Most previous work on spatial treemaps reflects this by focusing on the preservation of distances between the rectangular regions and their geographic counterparts (that is, they minimize the displacement of the regions).
However, often small displacement does not imply recognizability (swapping the position of two small neighboring countries can result in small displacement, but a big loss of recognizability).
In the case of cartograms, most emphasis has been put on preserving adjacencies between the geographic regions.
It has also been shown that while preserving the topology it is possible to keep the displacement error small~\cite{ks-rc-07,bsv-orel-11}.

In this paper we are interested in constructing high-quality spatial treemaps by prioritizing the preservation of topology, following a principle already used for rectangular cartograms. Previous work on treemaps has recognized that preserving neighborhood relationships and relative positions between the regions were important criteria~\cite{hkps-rma-04,mknrs-vant-07,wd-sot-08}, but we are not aware of treemap algorithms that put the emphasis on preserving topology.

\eenplaatje [trim = 0mm 39mm 70mm 0mm,clip=true,width=0.5\textwidth]
{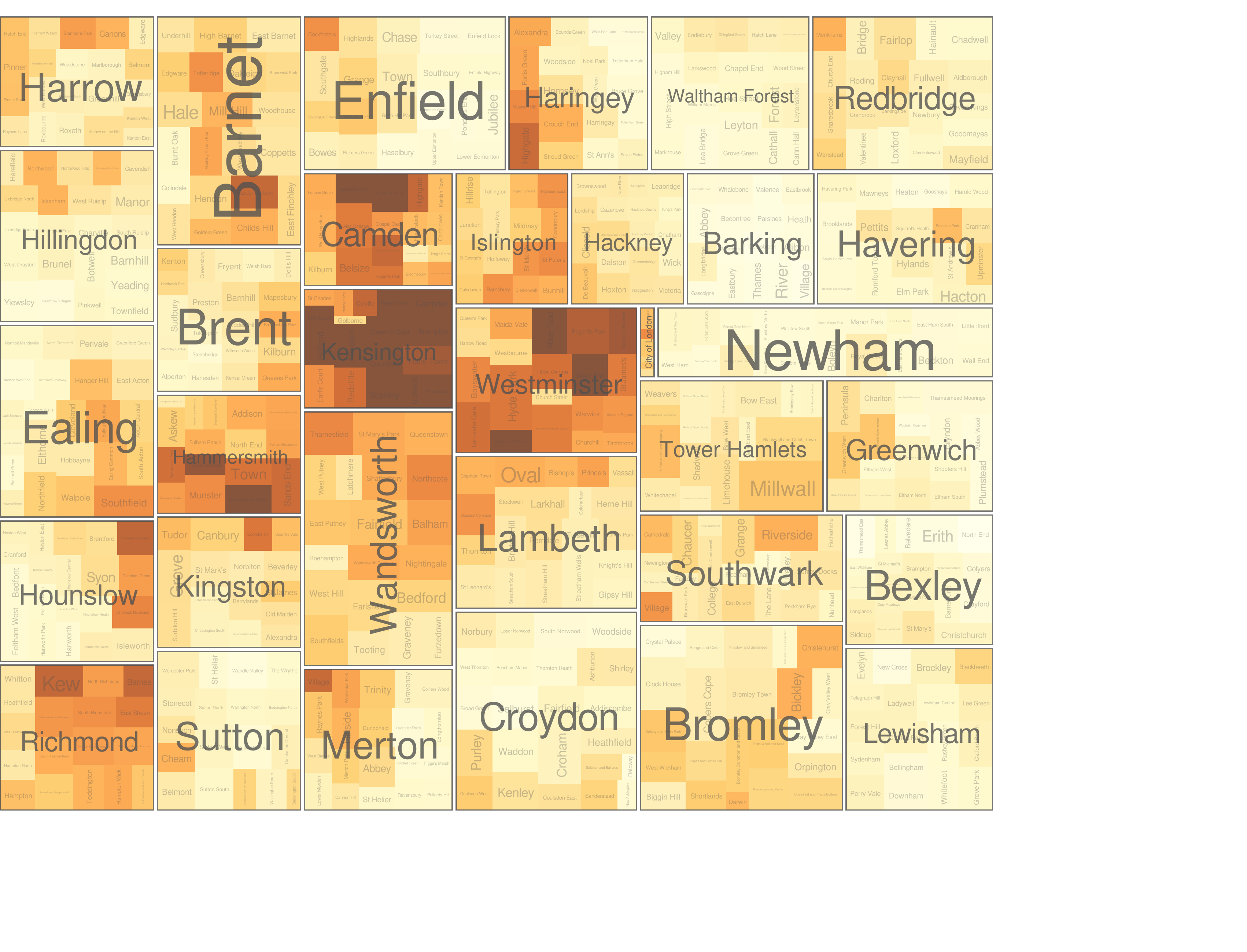}
{A 2-level spatial treemap from~\cite{sdw-chl-09}; used with permission.}

The importance of preserving adjacencies in spatial treemaps can be appreciated by viewing a concrete example.
Figure~\ref{fig:bor-war}, from~\cite{sdw-chl-09}, shows a spatial treemap of property transactions in London between 2000 and 2008, with two levels formed by the boroughs and wards of London and colors representing average prices. To see whether housing prices of neighboring wards are correlated, it is important to preserve adjacencies: otherwise it is easy to draw incorrect conclusions, like seeing clusters that do not actually exist, or missing existing ones.

Preserving topology in spatial treemaps poses different challenges than in (non-hierarchical) rectangular cartograms.
Topology-preserving rectangular cartograms exist under very mild conditions and can be constructed efficiently~\cite{bsv-orel-11,ks-rc-07}.
As we show in this paper, this is not the case when a hierarchy is added to the picture.

In this paper we consider the following setting:
the input is a hierarchical rectangular subdivision with two levels.
We consider only two levels due to the complexity of the general $m$-level case.
However, the two-level case is interesting on its own, and
applications that use only two-level data have recently appeared~\cite{sdw-chl-09}.

Furthermore, we adopt a 2-phase approach for building spatial treemaps.
In the first phase, a base rectangular cartogram is produced from the original geographic regions.
This can be done with one of the many algorithms for rectangular cartograms~\cite{bsv-orel-11}.
The result will contain all the bottom-level regions as rectangles, but the top-level regions will not be rectangular yet, thus will not represent the hierarchical structure.
In the second phase, we convert the base cartogram into a treemap by making the top-level regions rectangles.
It is at this stage that we intend to preserve the topology of the base cartogram as much as possible, and where our algorithms come in.
See Figure~\ref {fig:ex-input+ex-output} for an example.

 \tweeplaatjes[scale=0.8] {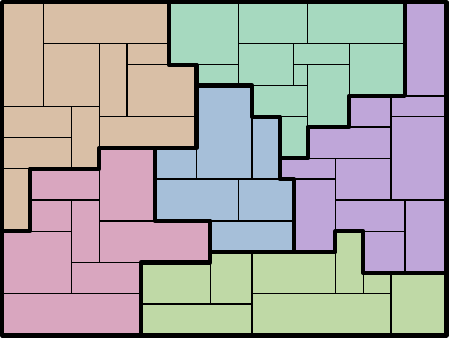} {ex-output} {(a) An example input: a full layout of the bottom level, but the regions at a higher level in the hierarchy are not rectangles. (b) The desired output: another layout, in which as many lower-level adjacencies as possible have been kept while reshaping the regions at a higher level into rectangles.}

The advantage of this 2-phase approach is that it allows for customization and user interaction.
Interactive exploration of the data is essential when visualizing large amounts of data.
The freedom to use an arbitrary rectangular layout algorithm in the first phase of the construction allows the user to prioritize the adjacencies that he or she considers most essential.
In the second phase, our algorithm will produce a treemap that will try to preserve as many as the adjacencies in the base cartogram as possible.

In addition, we go one step further and consider preserving the \emph{orientations} of the adjacencies in the base cartogram (that is, whether two neighboring regions share a vertical or horizontal edge, and which one is on which side).
This additional constraint is justified by the fact that the regions represent geographic or political regions, and relative positions between regions are an important factor when visualizing this type of data~\cite{bsv-orel-11,ks-rc-07}.
The preservation of orientations has been studied for cartograms~\cite{em-ocrl-09}, but to our knowledge, this is the first time they are considered for spatial treemaps.

  We can distinguish three types of adjacency-relations: (i) top-level adjacencies, (ii) internal bottom-level adjacencies (adjacencies between two rectangles that belong to the same top-level region), and (iii) external bottom-level adjacencies (adjacencies between two rectangles that belong to different top-level regions).
  As we argue in the next section, we can always preserve all adjacencies of types (i) and (ii) under a mild assumption, hence the objective of our algorithms is to construct treemaps that preserve as many adjacencies of type (iii) as possible.
We consider several variants of the problem, based on whether the orientations of the adjacencies have to be preserved, and whether the top-level layout is given in advance.
In order to give efficient algorithms, we restrict ourselves to top-level regions that are orthogonally convex.
This is a technical limitation that seems difficult to overcome, but that we expect  does not limit the applicability of our results too much: our algorithms should still be useful for many practical instances, for example, by subdividing non-convex regions into few convex pieces.

  \paragraph {\bf Results}
In the most constrained case in which adjacencies and their orientations need to be preserved and the top-level layout is given, we solve the problem in $O(n)$ time, where $n$ is the total number of rectangles.
The case in which the global layout is not fixed is much more challenging: it takes a combination of several techniques based on regular edge labelings to obtain an algorithm that solves the problem optimally in $O(k^4 \log k + n)$ time, for $k$ the number of top-level regions;
we expect $k$ to be much smaller than $n$.
Finally, we prove that the case in which the orientations of adjacencies do not need to be preserved is NP-hard; we give worst-case bounds and an approximation algorithm.

\section {Preliminaries}

  \paragraph {\bf Rectangles and Subdivisions}


    All geometric objects like rectangles and polygons in this paper are defined as rectilinear (axis-aligned) objects in the Euclidean plane $\R^2$.
    We work in the Euclidean plane $\R^2$.
    A \emph {rectangle} in this text is always an axis-aligned rectangle.
    We also define polygon, convex, etc. in a rectilinear sense.
    A~set of rectangles $\cal R$ is called a \emph {rectangle complex}
    if the interiors of none of the rectangles overlap, and each pair of
    rectangles is either completely disjoint or shares part of an edge; no two rectangles may meet in a single point. Each
    rectangle of a rectangle complex is a \emph{cell} of that complex. We represent
    rectangle complexes using a structure that has bidirectional
    pointers between neighboring cells.

    Let $\cal R$ be a rectangle complex. The \emph{boundary} of $\cal
    R$ is the boundary of the the union of the rectangles in $\cal
    R$.
    Note that this is always a proper polygon, but it could have multiple components and holes.  We say that
    $\cal R$ is
    \emph{simple} if its boundary is a simple polygon, i.e., it is
    connected and has no holes.  We say that $\cal R$ is \emph{convex}
    if its boundary is orthogonally convex, i.e., the intersection of
    any horizontal or vertical line with $\cal R$ is either empty or a
    single line segment. We say that $\cal R$ is \emph{rectangular} if
    its boundary is a rectangle.

    Let $\cal R'$ be another rectangle complex. We say that $\cal R'$ is an \emph {extension} of $\cal R$ if there is a
    bijective mapping between the cells in $\cal R$ and $\cal R'$ that preserves the adjacencies and their orientations. Note that $\cal R'$ could have adjacencies not present in
    $\cal R$ though.  We say that $\cal R'$ is a \emph {simple
      extension} of $\cal R$ if $\cal R$ is not simple but $\cal R'$ is;
    similarly we may call it a \emph {convex extension} or a \emph
    {rectangular extension}. 
    
    We show that every rectangle complex has a rectangular extension.
    \begin{lemma} \label{lem:extension}
      Let $\cal R$ be a rectangle complex. There always exists a rectangular extension of $\cal R$.
    \end {lemma}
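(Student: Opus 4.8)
The plan is to prove the lemma by induction on the number $n$ of cells of $\cal R$, building the rectangular extension incrementally while keeping tight control of the boundary of the complex constructed so far. The base case $n=1$ is immediate, since a single cell is already rectangular. In the inductive step, first dispose of the disconnected case: if the adjacency graph of $\cal R$ is disconnected, split it into its connected components, apply the induction hypothesis to each to obtain rectangular extensions, rescale these horizontally to a common width, and stack their bounding rectangles on top of one another. The union is again bounded by a rectangle; every component keeps all of its internal adjacencies and orientations, and the only new adjacencies are horizontal ones between a cell on the top side of one block and a cell on the bottom side of the block above, which an extension is permitted to introduce (a generic choice of coordinates avoids spurious single‑point contacts at the interfaces). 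Observe that this also removes the need to handle holes or multiply‑connected boundaries separately: since the pieces are ultimately reassembled from scratch, the final complex is a rectangle and has no holes.

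Now suppose $\cal R$ is connected and $n\ge2$. The idea is to delete a single \emph{extreme} cell $R$ — one that is as far as possible in some direction, chosen so that $R$ has no neighbour on one of its four sides and only a controlled number of neighbours on the two perpendicular sides — apply the induction hypothesis to $\cal R\setminus\{R\}$ to obtain a rectangular extension $\cal Q$ with bounding rectangle $B$, and then reinsert $R$ by growing $B$ along the corresponding side: $R$ becomes a new cell occupying part of a thin strip glued to that side, the cells of $\cal Q$ flanking it are stretched to fill the remainder of the strip, and the cells of $\cal Q$ that $R$ now caps become its neighbours on that side. This realises all oriented adjacencies of $R$ present in $\cal R$ and destroys none of the existing ones, so the enlarged complex is a rectangular extension of $\cal R$.

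The main obstacle, and where the real work lies, is making this reinsertion step actually fit together: an arbitrary rectangular extension of $\cal R\setminus\{R\}$ need not display the prescribed neighbours of $R$ as a single consecutive run along one side of its bounding rectangle, in the correct order and with the correct flanking cells. To get around this I would strengthen the inductive statement so that it promises a rectangular extension in which a prescribed cell (and possibly a prescribed short sequence of mutually adjacent cells) appears on a prescribed side of the bounding rectangle in a prescribed order, and coordinate the choice of the extreme cell $R$ with that promise — exploiting that in a rectangle complex the neighbours of $R$ along each of its sides are already linearly ordered by coordinate, so the required side‑and‑order pattern is consistent. A less hands‑on alternative is to pass to the adjacency graph, note that — because $\cal R$ is realised by honest rectangles — it is planar and contains no three cells that are pairwise adjacent with all three adjacencies of the same orientation, augment it to an internally triangulated planar graph with a quadrangular outer face and no separating triangle while respecting the given edge orientations, and invoke the correspondence between regular edge labellings (transversal structures) and rectangular layouts; in that approach the delicate point becomes showing that the orientation data of $\cal R$ extends to a valid labelling of the augmented graph.
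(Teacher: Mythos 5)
Your overall strategy (induction on the number of cells: delete an extreme cell, extend the rest, glue the deleted cell back as a strip on one side of the bounding box) is quite different from the paper's, which is non-inductive: it surrounds $\cal R$ by four bounding rectangles and then eliminates the resulting holes one by one, by growing any rectangle that meets a hole along a full free edge into that hole, with a single local move for the one degenerate ``windmill'' configuration in which no such rectangle exists. Unfortunately, your version has a genuine gap exactly at the point you flag as ``where the real work lies,'' and that gap is not a technicality. For the reinsertion to preserve the oriented adjacencies of the deleted cell $R$, the rectangular extension $\cal Q$ of $\cal R\setminus\{R\}$ must expose \emph{all} neighbours of $R$ on the boundary of its bounding box, consecutively, in the correct cyclic order, and with nothing non-stretchable between them. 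The plain induction hypothesis gives you an arbitrary rectangular extension, in which a neighbour of $R$ may well be buried under a newly created adjacency. You propose to repair this by strengthening the induction hypothesis to prescribe which cells appear on which side in which order, but that strengthened statement is never formulated precisely, let alone proved; it is essentially a general-position version of Lemmas~\ref{lem:extensible} and~\ref{lem:contiguos}, which the paper only establishes for \emph{convex} complexes and for \emph{consecutive} engaged cells, and which are the technically delicate results of Section~\ref{sec:preserve}. Note also that the prescribed set is not ``short'': a topmost cell can have linearly many bottom-neighbours, all of which must land on the top side of $\cal Q$.

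There is also a concrete configuration your reinsertion cannot handle as described: an extreme cell with two or more neighbours stacked along one of its perpendicular sides. Take $R=[1,2]\times[0,10]$ with left neighbours $A=[0,1]\times[0,4]$ and $B=[0,1]\times[4,8]$ and, say, a cell above $R$ and one to its right. Then $R$ is bottom-extreme with a free bottom side, but a thin horizontal strip glued to the bottom of a bounding box can give $R$ at most one left neighbour, so both oriented adjacencies $A$-left-of-$R$ and $B$-left-of-$R$ cannot survive. You would have to argue that some \emph{other} deletable cell with ``controlled'' neighbourhoods always exists, and no such argument is given (nor is it clear it can be made uniformly, e.g.\ for ring-shaped complexes where every extreme cell sees two neighbours on the side facing the hole). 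The fallback via regular edge labelings has the same character: extending the partial orientation data of $\cal R$ to a regular edge labeling of a triangulation without separating triangles is again precisely the hard content, deferred rather than done. I would recommend abandoning the induction and instead adopting the paper's hole-filling argument: box the complex in, repeatedly extend a rectangle with a full free edge into a hole (each such step closes the hole or splits it into simpler ones), and dispose of the windmill case by sliding two opposite rectangles together; no adjacency is lost and no orientation changes at any step.
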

    \begin {proof}
    We first augment $\cal R$ by four rectangles forming a bounding box of $\cal R$. Our goal is to extend the complex so that no holes inside the bounding box remain, while all existing adjacencies are preserved with their orientation. Obviously each hole is formed by at least four adjacent rectangles. Let $H$ be a hole of the augmented complex. If there is a rectangle $R$ adjacent to $H$ with a full rectangle edge, we can extend $R$ into the hole until it touches another rectangle. This either closes the hole or splits it into holes of lower complexity. Now let's assume that there is a hole without a rectangle adjacent to it along a full edge. Then each edge of the hole is a partial rectangle edge blocked by another rectangle. This is only possible in a ``windmill'' configuration of four rectangles cyclically blocking each other. But such a hole can be removed by moving two opposite rectangles toward each other while shrinking the other two rectangles. See Figure~\ref{fig:windmill}. None of the operations removes adjacencies or changes their orientations.
    \end {proof}
    
    \eenplaatje {windmill} {Removing a windmill hole.}

    We define $\D = \{\mathrm{left}, \mathrm{right}, \mathrm{top},
    \mathrm{bottom}\}$ to be the set of the four cardinal directions. For a
    direction $d \in \D$ we use the notation $-d$ to refer to the
    direction opposite from $d$.  We
    define an object $O \subset \R^2$ to be \emph {extreme} in direction
    $d$ with respect to a rectangle complex $\cal R$ if there is a point
    in $O$ that is at least as far in direction $d$ as any point in
    $\cal R$.  Let $R \in \cal R$ be a cell, and $d \in \D$ a
    direction. We say $R$ is \emph {$d$-extensible} if there exists a
    rectangular extension $\cal R'$ of $\cal R$ in which $R$ is extreme
    in direction $d$ with respect to $\cal R'$ (or in other words, if
    its $d$-side is part of the boundary of $\cal R'$).

    A set of simple rectangle complexes $\cal L$ is called a
    (rectilinear) \emph{layout} if the boundary of the union of all
    complexes is a rectangle, the interiors of the complexes are
    disjoint, and no point in $\cal L$ belongs to more than three
    cells. If all complexes are rectangular we say that $\cal L$ is a
    \emph{rectangular layout}. We call the rectangle bounding $\cal L$
    the \emph{root box}.

    Let $\cal L$ be a rectilinear layout. We define the
    \emph{global layout} $\cal L'$ of $\cal L$ as the subdivision of the
    root box of $\cal L$, in which the (\emph{global}) regions are
    defined by the boundaries of the complexes in $\cal L$. We say $\cal
    L'$ is \emph{rectangular} if all regions in $\cal L'$ are rectangles.

  \paragraph {\bf Dual Graphs of Rectangle Complexes}

   The \emph {dual graph} of a rectangular complex is an embedded planar graph with one vertex for every rectangle in the complex, and an edge between two vertices if the corresponding rectangles touch (have overlapping edge pieces).
    The \emph {extended dual graph} of a rectangular complex with a rectangular boundary has four additional vertices for the four sides of the rectangle, and an edge between a normal vertex and an additional vertex if the corresponding rectangle touches the corresponding side of the bounding box.
    We will be using dual graphs of the whole rectangular layout, of individual complexes, and of the global layout (ignoring the bottom level subdivision); Figure~\ref {fig:bottomlevel-rects+bottomlevel-graph+toplevel-regions+toplevel-graph} shows some examples.
   Extended dual graphs of rectangular rectangle complexes are fully triangulated (except for the outer face which is a
    quadrilateral), and the graphs that can arise in this way are characterized by the following lemma~\cite{kh-tafrd-93,kk-rdpg-85,ks-rc-07}:

    \vierplaatjes {bottomlevel-rects} {bottomlevel-graph} {toplevel-regions} {toplevel-graph} {(a) A bottom level rectangle complex. (b) The dual graph of the complex. (c) A global layout. (d) The extended dual graph of the global layout.}

    \begin {lemma} \label {lem:no-sep-tri}
      A triangulated plane graph $G$ with a quadrilateral outer face is the dual graph of a rectangular rectangle complex if and only if $G$ has no separating triangles.
    \end {lemma}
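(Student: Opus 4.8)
This is the classical characterisation of the graphs admitting a rectangular dual, so the plan is to recall short arguments for the two directions and to isolate where the separating-triangle hypothesis is actually used. For necessity, suppose $\mathcal{R}$ is a rectangular rectangle complex whose dual graph is $G$ and, toward a contradiction, that $G$ has a separating triangle $\{u,v,w\}$, so that $G$ has a vertex $z$ drawn strictly inside the $3$-cycle $uvw$. The first step is a purely geometric observation: if three axis-aligned rectangles are pairwise edge-adjacent with disjoint interiors, then their union has no bounded complementary region. Indeed, such a region would be a rectilinear polygon whose topmost edge lies on the bottom side of one of the three rectangles, whose bottommost edge lies on the top side of one of them, and similarly for its leftmost and rightmost edges; the topmost and bottommost edges cannot be carried by the same rectangle (on a rectangle the top side lies strictly above the bottom side), and likewise for the leftmost and rightmost edges, so since there are only three rectangles one of the remaining coincidences forces one of these edges to have zero length, a contradiction. (Equivalently, one checks that three pairwise edge-adjacent axis-aligned rectangles always share a common point.) Since the embedding of $G$ is the one induced by the geometry of $\mathcal{R}$, we may draw the dual edges so that the arc for $uv$ stays inside $R_u\cup R_v$, and similarly for $vw$ and $wu$; then $uvw$ bounds a closed topological disk $D$ with $\partial D\subseteq R_u\cup R_v\cup R_w$, while $z$ --- i.e.\ the centre of $R_z$ --- lies in the interior of $D$ but outside $R_u\cup R_v\cup R_w$. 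Hence $R_u\cup R_v\cup R_w$ has a bounded complementary region, contradicting the observation, so $G$ has no separating triangle.

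For sufficiency the plan is to build the complex from a regular edge labelling. Let $G$ be triangulated with a quadrilateral outer face $s_{\mathrm{W}}s_{\mathrm{N}}s_{\mathrm{E}}s_{\mathrm{S}}$ and no separating triangle. One first produces a \emph{regular edge labelling} of $G$: a $2$-colouring and acyclic orientation of the interior edges so that the interior edges around every interior vertex, read in cyclic order, split into four nonempty runs --- outgoing of the first colour, outgoing of the second, incoming of the first, incoming of the second --- while the interior edges at the four outer vertices are monochromatic and coherently oriented. The two colour classes then form complementary planar $st$-graphs on $V(G)$, one running from $s_{\mathrm{S}}$ to $s_{\mathrm{N}}$ and one from $s_{\mathrm{W}}$ to $s_{\mathrm{E}}$, and reading off longest-path values in these two graphs yields, for each vertex $v$, the $y$-coordinates of the horizontal sides and the $x$-coordinates of the vertical sides of an axis-aligned rectangle $R_v$. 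These rectangles tile the root box with pairwise disjoint interiors, and a count via Euler's formula (using that $G$ is triangulated with a quadrilateral outer face) shows that $R_u$ and $R_v$ share an edge piece exactly when $uv\in E(G)$, so $\{R_v\}$ is a rectangular rectangle complex with dual graph $G$.

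The only substantial step, and the main obstacle, is the existence of the regular edge labelling, and this is precisely where ``no separating triangle'' enters: one argues by induction, deleting a carefully chosen vertex from the outer boundary in a canonical/shelling order so that the remaining graph stays triangulated with a quadrilateral outer face, and a separating triangle is exactly the configuration that would obstruct such a deletion; see~\cite{kh-tafrd-93,kk-rdpg-85,ks-rc-07}. Everything else --- extracting the $st$-graph coordinates from the labelling and verifying that the constructed complex realises exactly the adjacencies of $G$, with none missing and none spurious --- is routine bookkeeping once the labelling is in hand.
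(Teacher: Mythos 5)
The paper does not actually prove this lemma --- it is stated as a known characterization and attributed to~\cite{kh-tafrd-93,kk-rdpg-85,ks-rc-07} --- so there is no internal proof to match yours against; what you have written is a reasonable self-contained reconstruction of that classical result. Your necessity direction is genuinely nice and complete in outline: the observation that three pairwise edge-adjacent axis-aligned rectangles have a common point (one-dimensional Helly on the two coordinate projections), hence a star-shaped and therefore simply connected union, immediately forbids the rectangle of an interior vertex $z$ from being trapped inside the disk bounded by the triangle $uvw$; your alternative ``which rectangle carries the topmost/bottommost/leftmost/rightmost edge of the hole'' case analysis also works. One small point you gloss over: the lemma concerns the \emph{extended} dual, so a separating triangle may use one or two of the four side-vertices, and you should say explicitly that the argument survives after thickening each side of the root box into an external rectangle (two adjacent sides can be made to overlap along a segment at the corner, so the Helly argument still applies). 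For sufficiency you follow the standard regular-edge-labelling route of Kant and He: this is the same machinery the paper itself uses later (Section 3.2), so it is the natural choice, but be aware that the existence of a regular edge labelling for an internally $4$-connected irreducible triangulation is the entire content of that direction, and your proof defers it to the cited literature with only a one-sentence gesture at the shelling induction; as a citation of a known theorem that is acceptable (and matches what the paper does), but as a proof it leaves the hard step unproved. The bookkeeping you call routine --- that the longest-path coordinates give interiorly disjoint rectangles realizing exactly the edges of $G$ --- is indeed standard but is where most of the writing effort would go in a fully detailed account.
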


    Now, consider the three types of adjacencies we wish to preserve:
    1) (top-level) adjacencies between global regions, 2) internal
    (bottom-level) adjacencies between
    the cells in one rectangle complex, and 3) external (bottom-level) adjacencies
    between cells of adjacent rectangle complexes.

    { \observation
      It is always possible to keep all internal bottom-level adjacencies.
    }

    { \observation It is possible to keep all top-level adjacencies if
      and only if the extended dual graph of the global input layout has no separating triangles.
    }

Observation 1 follows by applying Lemma~\ref {lem:extension} to all regions, and Observation~2
follows from  Lemma~\ref {lem:no-sep-tri} since the extended dual graph of the global regions is fully triangulated.

    \eenplaatje {corner-lock} {Not all external adjacencies can be kept.}

     From now on we assume that the dual graph of the global regions has no separating triangles, and we will preserve all adjacencies of types 1 and 2. Unfortunately, it is not always possible to keep adjacencies of type 3---see Figure~\ref {fig:corner-lock}---and for every adjacency of type 3 that we fail to preserve, another adjacency that was not present in the original layout will appear. Therefore, our aim is to preserve as many of these adjacencies as possible.


\section{Preserving orientations}\label{sec:preserve}

We begin studying the version of the problem where all internal
adjacencies have to be preserved respecting their original
orientations.  Additionally, we want to maximize the number of
preserved and correctly oriented (bottom-level) external adjacencies.
We consider two scenarios: first we assume that the global layout is
part of the input, and then we study the case in which we optimize
over all global layouts. The former situation is particularly
interesting for GIS applications, in which the user specifies a
certain global layout that needs to be filled with the bottom-level
cells. If, however, the bottom-level adjacencies are more important,
then optimizing over global layouts allows to preserve more external
adjacencies.

\subsection{Given the global layout}

In this section we are given, in addition to the initial two-level
subdivision $\cal L$, a global target layout $\cal L'$.  The goal is to find a two-level treemap that preserves
all oriented bottom-level internal adjacencies and that
maximizes the number of preserved oriented bottom-level external adjacencies in the output.

First observe that in the rectangular output layout any two
neighboring global regions have a single orientation for their
adjacency. Hence we can only keep those bottom-level external
adjacencies that have the same orientation in the input as their
corresponding global regions have in the output layout.  Secondly,
consider a rectangle $R$ in a complex $\cal R$, and a rectangle $B$ in
another complex $\cal B$. Observe that if $R$ and $B$ are adjacent in
the input, for example with $R$ to the left of $B$, then their
adjacency can be preserved only if $R$ is right-extensible in $\cal R$
and $B$ is left-extensible in $\cal B$.

The main result in this section is that the previous two conditions
are enough to describe all adjacencies that cannot be preserved,
whereas all the other ones can be kept.  Furthermore, we will show how
to decide extensibility for convex complexes, and how to construct a
final solution that preserves all possible adjacencies, leading to an
algorithm for the optimal solution.

Recall that we assume all regions are orthogonally
convex. 
Consider each rectangle complex of $\cal L$ separately. Since we know the required global
layout and since all cells externally adjacent to our
region are consecutive along its boundary, we can
immediately determine the cells on each of the four sides of the
output region (see Figure~\ref
{fig:green-input+green-global+green-extensions+green-output}). The
reason is that for a rectangle $R$ that is exterior to its region
$\cal R$, and that is adjacent to another rectangle $B \in \cal B$,
their adjacency is relevant only if $\cal R$ and $\cal B$ are adjacent
with the same orientation in the global layout.
We can easily categorize the extensible rectangles of a convex
rectangle complex.

    \vierplaatjes {green-input} {green-global} {green-extensions} {green-output}
    { (a) A region in the input.
      (b) The same region in the given global layout.
      (c) Edges of rectangles that want to become part of a boundary have been marked with arrows. Note that one rectangle wants to become part of the top boundary but can't, because it is not extensible in that direction.
      (d) All arrows that aren't blocked can be made happy.
    }

\newcommand{\textlemmaextensible}{%
  Let $\cal R$ be a convex rectangle complex, let $R \in \cal R$
  be a rectangle, and $d \in \D$ a direction.  $R$ is
  $d$-extensible if and only if there is no rectangle $R' \in \cal
  R$ directly adjacent to $R$ on the $d$-side of $R$.}

    \begin {lemma}\label{lem:extensible}
      \textlemmaextensible
    \end {lemma}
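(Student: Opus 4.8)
The statement has two directions. For the "only if" direction, suppose $R'\in\cal R$ is directly adjacent to $R$ on the $d$-side of $R$. In any extension $\cal R'$ of $\cal R$ the adjacency between $R$ and $R'$ must be preserved with its orientation, so $R'$ still lies immediately on the $d$-side of $R$. Hence $R$ cannot be extreme in direction $d$, because any point of $R'$ that realises the adjacency is strictly further in direction $d$ than the $d$-side of $R$. This contradicts $d$-extensibility, so no such $R'$ can exist. This direction is essentially immediate from the definitions of "extension" and "extreme".

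The "if" direction is the substantive part. Assume no rectangle of $\cal R$ is directly adjacent to $R$ on its $d$-side; we must build a rectangular extension $\cal R'$ in which $R$ touches the $d$-side of the bounding box. The plan is to work with the dual graph (or rather the contact structure) of $\cal R$ and push everything that currently lies "$d$-above" $R$ out of the way. Concretely, I would first observe that since $\cal R$ is orthogonally convex, the cells of $\cal R$ that intersect the vertical (if $d\in\{\mathrm{top},\mathrm{bottom}\}$) or horizontal slab spanned by $R$ form a simple stack in direction $d$; the hypothesis says the layer immediately on the $d$-side of $R$ within this slab is empty, i.e. $R$ already reaches the $d$-extreme of this slab. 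The cells that block $R$ from being globally $d$-extreme are therefore only those lying in the slabs to the sides of $R$ that happen to stick out further in direction $d$. I would then argue that one can locally reshape the complex — sliding the $d$-sides of those side-cells back, and correspondingly enlarging $R$ and the cells between $R$ and the boundary in direction $d$ — without destroying any adjacency or changing any orientation, exactly in the spirit of the windmill/slide operations used in the proof of Lemma~\ref{lem:extension}. Finally, apply Lemma~\ref{lem:extension} itself to fill any holes this creates, obtaining a bona fide rectangular extension $\cal R'$ with $R$ extreme in direction $d$.

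To make the reshaping rigorous I expect to induct on the number of cells that protrude past $R$ in direction $d$: pick an outermost such cell $S$, note that by orthogonal convexity and the emptiness hypothesis there is "room" between $R$ and the current $d$-boundary to grow $R$ while shrinking $S$, perform one slide, and recurse. Orthogonal convexity is what guarantees that the region swept out stays a legal rectangle complex (no new holes of a bad type, no cell split in two) and that the cell immediately $d$-ward of $R$ stays empty throughout. The main obstacle is precisely this geometric bookkeeping: verifying that a single slide step is always available — i.e. that some protruding side-cell can be shrunk without first having to move a cell that is itself blocked — and that it preserves all orientations. I would handle this by choosing $S$ to be a cell whose $d$-side is currently on the boundary of $\cal R$ (such a cell exists among the protruding ones) so that shrinking its $d$-side is unobstructed, and then checking case by case (the side-cell lies to the left of $R$, or to the right, or the slab over $R$ itself still has a cell strictly $d$-ward which by hypothesis is impossible) that the slide is adjacency- and orientation-preserving. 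Once the count reaches zero, $R$ spans the $d$-extreme of the whole complex, and Lemma~\ref{lem:extension} finishes the argument.
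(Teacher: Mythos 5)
Your ``only if'' direction is fine and is exactly the paper's argument. For the ``if'' direction you correctly isolate the key fact --- by orthogonal convexity plus the hypothesis, the part of $R$'s slab beyond its $d$-side contains no cell of $\cal R$, so $R$ already reaches the $d$-extreme \emph{within its own slab} --- but you then take a wrong turn. There is no need to touch the cells in neighbouring slabs that protrude further in direction $d$: an extension is only required to preserve adjacencies and orientations, not the outer shape of the complex, and it is explicitly allowed to acquire new adjacencies. So one simply grows $R$ in direction $d$, through empty space, until it reaches the global $d$-extreme of $\cal R$; the result is still a simple complex. Then one adds a temporary bounding box of four rectangles with $R$ touching the $d$-side rectangle and invokes Lemma~\ref{lem:extension}. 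That is the paper's entire proof.

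The shrinking procedure you propose instead is not just unnecessary, it is unsound as stated, and it is precisely the step you flag as unverified. Take $d=\mathrm{top}$, let $S$ be a tall cell to the left of $R$ protruding above $R$, and let $T$ be a cell adjacent to $S$ only along the upper (protruding) part of $S$'s side; such a configuration is perfectly orthogonally convex. Your chosen slide --- $S$ has its top on the boundary, so ``shrinking its $d$-side is unobstructed'' --- would pull $S$'s top edge down and destroy the $S$--$T$ adjacency, which the extension must preserve. Growing $R$ cannot compensate, since $R$ lives in a disjoint slab. So the induction on protruding cells does not go through; the fix is to delete that machinery entirely and argue only that $R$ itself can be extended unobstructed, which your own ``empty slab'' observation already gives you.
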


\begin {proof}
  For the `only if' part, simply note that if there is such a
  rectangle $R' \in \cal R$, then the adjacency between $R$ and
  $R'$ must be preserved, with its original orientation. Hence
  there is always a point in $R'$ that is further in direction $d$
  than any point in $R$. So $R$ is not $d$-extensible.

  For the `if' part, consider the complex obtained by extending
  $R$ in direction $d$ until it becomes extreme in that
  direction. This is always possible because $\cal R$ is convex;
  the resulting complex is still simple. Now we add a temporary
  bounding box consisting of four rectangles around $\cal R$, one
  in each direction, such that $R$ is adjacent to the one on the
  $d$-side.  Then we can apply Lemma~\ref{lem:extension} and find
  a rectangular extension of $\cal R$ where $R$ is extreme on the
  $d$-side.
\end {proof}

    Unfortunately, though, we cannot extend all extensible rectangles
    at the same time. However, we show that we can actually extend
    all those rectangles that we want to extend for an optimal solution.

    We call a rectangle of a certain complex belonging to a global region
    \emph {engaged} if it
    wants to be adjacent to a rectangle of another global region, and the
    direction of their desired adjacency is the same as the direction
    of the adjacency between these two regions in the global
    layout.  We say it is $d$-engaged if this direction is $d \in \D$.

    Therefore, the rectangles that we want to extend are exactly those
    that are $d$-extensible and $d$-engaged, since they are the only
    ones that help preserve bottom-level exterior adjacencies.  It
    turns out that extending all these rectangles is possible, because
    the engaged rectangles of $\cal R$ have a special property: 

\newcommand{\textlemmagivengloballayout}{%
If we walk around the boundary of a region $\cal R$, we
    encounter all $d$-engaged rectangles consecutively.}
    \begin{lemma}\label{lem:given-global-layout}
      \textlemmagivengloballayout
    \end{lemma}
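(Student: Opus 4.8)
The plan is to reduce to $d=\mathrm{top}$ by symmetry, pin down exactly which cells are $\mathrm{top}$-engaged, and then read them off while walking around $\partial\cal R$. A cell $R\in\cal R$ is $\mathrm{top}$-engaged precisely when some part of the top edge of $R$ lies on $\partial\cal R$ and, across that part, $\cal R$ borders a neighbouring region that occupies the $\mathrm{top}$ side of $\cal R$ in the target layout $\cal L'$. Since $\cal L'$ preserves adjacency orientations, any region lying directly above $\cal R$ in the input is forced onto the $\mathrm{top}$ side of $\cal R$ in $\cal L'$. Hence, unless $\cal R$ touches the top of the root box --- in which case $\cal R$ becomes a rectangle along the top of $\cal L'$, has no $\mathrm{top}$ neighbour, and the statement is vacuous --- the $\mathrm{top}$-engaged cells of $\cal R$ are exactly its \emph{top cells}: those cells that are the uppermost cell of $\cal R$ over some interval of $x$-coordinates.

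Next I would exploit orthogonal convexity of $\cal R$. As $x$ sweeps across the $x$-range of $\cal R$ the uppermost cell changes, visiting exactly the top cells; between two consecutive ones the upper boundary of $\cal R$ makes a vertical step which is part of $\partial\cal R$, so in a walk around $\partial\cal R$ one traverses the top edge of one top cell, then this step, then the top edge of the next. Thus the top cells are met, in sweep order, along a single stretch of the walk, and the lemma would follow once two pathologies are ruled out: (i) a top cell reappearing, via one of its vertical edges, on an unrelated part of $\partial\cal R$; and (ii) a non-top cell ``buried'' inside a step, which happens exactly when the step is tall enough to be covered by several cells of $\cal R$ whose lowest one is not a top cell.

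Ruling out (i) and (ii) is the crux, and it is here that the full hypotheses are needed --- orthogonal convexity of \emph{all} regions together with the standing assumption that $\cal L$'s global layout admits an orientation-preserving rectangular realization $\cal L'$ --- since without them the claim is simply false: one can build an orthogonally convex $\cal R$ whose cell decomposition sandwiches a non-top cell between two top cells along the walk. The argument I have in mind is a local orientation analysis around the offending step: a deep step (or a bridging cell) forces a neighbouring region $\cal M$ pressed against the vertical part of the step, so that $\cal M$ meets $\cal R$ with a horizontal (``side'') orientation while also meeting a $\mathrm{top}$ neighbour of $\cal R$; chasing these two adjacencies shows that the orientations around $\cal R$ cannot be completed to a rectangular layout (some region would end up adjacent to $\cal R$ with two incompatible orientations). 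Hence no such step or bridge occurs, the top cells form a single arc of the walk, and so do all the $\mathrm{top}$-engaged cells.

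Finally I would check two routine points used above: that the shared boundary $\cal R\cap\cal B$ of two orthogonally convex regions is connected --- being an intersection of orthogonally convex sets it is itself orthogonally convex, hence connected --- so that ``the region across an edge'' is well defined along each boundary arc; and that a neighbour lying above $\cal R$ in the input must indeed occupy the $\mathrm{top}$ side of $\cal R$ in $\cal L'$, which is immediate from orientation preservation once one observes that an adjacency whose orientation is ``mixed'' in the input could not be preserved at all and therefore does not arise.
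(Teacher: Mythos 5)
There is a genuine gap, on two levels. First, your characterization ``the $\mathrm{top}$-engaged cells of $\cal R$ are exactly its top cells'' is false, because engagement is defined relative to the \emph{given} target layout: a cell with its top edge on $\partial\cal R$ facing a region $\cal B$ is $\mathrm{top}$-engaged only if $\cal B$ occupies the top side of $\cal R$ in the target layout $\cal L'$, and $\cal L'$ is an arbitrary prescribed rectangular layout that need not place $\cal B$ there. Your supporting claims are also wrong: a region lying above $\cal R$ in the input is \emph{not} forced onto the top side of $\cal R$ in $\cal L'$ (the whole point of the section is that some external adjacencies change orientation and are then lost), and mixed-orientation region adjacencies do arise --- two orthogonally convex regions can meet along a staircase, so the same neighbour $\cal B$ borders some cells of $\cal R$ from above and others from the right, and only one of these orientations survives in $\cal L'$. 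So $\mathrm{top}$-engaged cells form a proper, $\cal L'$-dependent subset of the top cells, and your sweep argument about top cells does not directly speak about them.

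Second, and more importantly, the statement to prove is that no $d'$-engaged cell separates two $d$-engaged cells, and showing that the top cells (resp.\ right cells) each form a consecutive run does not give this: on the upper-right staircase of an orthogonally convex region a single cell can be both a top cell and a right cell, so consecutiveness of each class separately does not exclude an interleaving such as $\mathrm{top}$-engaged, $\mathrm{right}$-engaged, $\mathrm{top}$-engaged. You correctly identify ruling this out as ``the crux,'' but you only gesture at a ``local orientation analysis around the offending step'' without carrying it out --- and it is not local. The paper's proof is a short global argument that you never invoke: the cyclic order of the neighbouring regions along $\partial\cal R$ agrees with their cyclic order around the rectangle representing $\cal R$ in $\cal L'$, where the neighbours on each of the four sides are contiguous; a pattern $d, d', d$ of engaged cells would force the side-assignment of neighbours to be non-contiguous around a rectangle, a contradiction. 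To repair your proof you would need to replace the erroneous characterization by this consistency between the input boundary order and the sides of the output rectangle, at which point the sweep machinery becomes unnecessary.
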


\begin{proof}
  Suppose that when walking clockwise along the boundary of $\cal R$
  we encounter rectangles $R_1, R_2, R_3$ that are $d$-,$d'$-, and
  $d$-engaged, respectively.  Since $R_1$ and $R_3$ are both
  $d$-engaged, in the global layout they have the same direction of
  external adjacency.  However, if $d' \neq d$, then $R_2$ has a
  different direction, implying that in the global layout this is also
  the same way.  This contradicts the fact that in the global layout
  $\cal R$ is a rectangle, so the rectangles engaged in the four
  different directions appear contiguously.
\end{proof}

This property of $d$-engaged rectangles is useful due to the following
fact.

\newcommand{\textlemmacontiguous}{%
Let $\cal R$ be a convex rectangle complex composed of $r$
rectangles, and let $S$ be a subset of the extensible and
engaged rectangles in $\cal R$ with the property that if we
order them according to a clockwise walk along the boundary of $\cal
R$, all $d$-extensible rectangles in $S$ are encountered
consecutively for each $d \in \D$ and in the correct clockwise
order. We can compute, in $O(r)$ time, a rectangular extension
$\cal R'$ of $\cal R$ in which all $d$-extensible rectangles in
$S$ are extreme in direction $d$, for all $d \in
\D$. 
}

    \begin {lemma}
      \label{lem:contiguos}
      \textlemmacontiguous
    \end {lemma}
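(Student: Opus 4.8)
The plan is to turn the hypothesis into a direct, sweep-based construction: inflate the rectangles of $S$ toward the sides of a bounding box and then repair the leftover gaps. First I would add a bounding box around $\cal R$, separated from it by a thin empty moat, and split $S$ into the groups $S_d$ ($d\in\D$) of $d$-extensible rectangles. By hypothesis each $S_d$ is a contiguous run along the clockwise boundary walk of $\cal R$, and the four runs occur in the cyclic order left, top, right, bottom (which is what ``the correct clockwise order'' means here); consecutive runs may share a single rectangle that happens to be extensible in two adjacent directions. The target is a rectangular extension in which, for every $d$, every $R\in S_d$ has its $d$-side on the corresponding side of the box.

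The geometric heart of the proof is a shadow claim that I would derive from orthogonal convexity of $\cal R$ together with Lemma~\ref{lem:extensible}. For a $d$-extensible rectangle $R$, define its \emph{$d$-shadow} as the set of points strictly beyond the $d$-side of $R$, inside the orthogonal slab spanned by that side, up to the box. I claim this shadow contains no cell of $\cal R$: a line through $R$ orthogonal to its $d$-side meets $\cal R$ in a single segment, and were that segment to extend past the $d$-side of $R$, the no-point-contact property of rectangle complexes would force a cell of $\cal R$ directly adjacent to $R$ on its $d$-side, contradicting $d$-extensibility. The same argument shows that the $d$-shadows of two distinct rectangles of $S_d$ have disjoint interiors (overlapping slabs would stack one cell beyond the other). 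Hence all rectangles of $S_d$ can be grown simultaneously into their shadows until their $d$-sides reach the box, and this destroys no adjacency of $\cal R$ and changes no orientation, since every $d$-side that moves was bordering empty space.

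Next I would argue that the four groups of extensions are mutually compatible, which is where the clockwise-order hypothesis is really used. Extensions in opposite directions never interact. Extensions in two orthogonal directions, say ``up'' (for $S_T$) and ``left'' (for $S_L$), can only overlap in the upper-left quadrant of the box, and there the only rectangles that can be involved are the last rectangle of the left run and the first rectangle of the top run in clockwise order; the ordering hypothesis guarantees these are the only candidates. If they coincide, that single rectangle is made extreme in both directions and occupies the corner of the box; otherwise their shadows are interior-disjoint and the corner pocket is simply left empty. After performing all four groups of extensions we obtain a simple rectangle complex inside the box in which every $R\in S_d$ already has its $d$-side on the box, and whose only remaining defects are staircase-shaped holes and empty corner pockets, none of them incident to an original adjacency. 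Repairing them yields the desired rectangular extension $\cal R'$; Lemma~\ref{lem:extension} already guarantees that a repair exists preserving all adjacencies and their orientations, but I would actually perform it by a direct greedy sweep, since each such hole is bounded on enough sides by full rectangle edges to be closed immediately.

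For the time bound, the boundary walk and the classification of $S$ into the $S_d$ take $O(r)$ time using the adjacency pointers, the new coordinates are obtained in one further traversal (the $d$-extent of each $R\in S_d$ becomes the box side, and the remaining coordinates propagate along the complex), and closing the holes costs time proportional to their total boundary complexity, which is $O(r)$; the total is $O(r)$. I expect the main obstacle to be the bookkeeping at the four corners and the careful verification that no step deletes an adjacency or flips an orientation --- that is, that the output is genuinely an extension of $\cal R$ in the sense of the paper --- together with carrying out the hole repair in guaranteed linear time rather than relying on the unspecified running time of Lemma~\ref{lem:extension}.
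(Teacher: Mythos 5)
Your proposal is correct and follows essentially the same route as the paper: extend all rectangles of $S$ simultaneously toward their respective sides (using orthogonal convexity and the contiguity/ordering hypothesis to rule out conflicts, which your shadow argument makes slightly more explicit), then complete to a rectangular extension by filling the remaining staircase-shaped holes in linear time. The only cosmetic difference is at the corners, where the paper closes the gap by extending one of the two adjacent run-endpoints in the orthogonal direction rather than deferring the pocket to the repair phase.
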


\begin {proof}
  We use the same idea as in the proof of
  Lemma~\ref{lem:extensible}, but now we extend all rectangles in
  $S$ at the same time.  Since by
  Lemma~\ref{lem:given-global-layout} $d$-engaged rectangles
  appear consecutively around the boundary of $\cal R$, there
  cannot be any conflicts preventing the extension of $d$-engaged
  rectangles: rectangles with the same direction extend all toward
  the same side, thus they can all be made extreme.  On the other
  hand, two rectangles extended toward different directions cannot
  influence each other because that would imply that the
  directions do not appear contiguously or in the wrong order.

  It remains to apply Lemma~\ref{lem:extension} and show that the
  rectangular extension can be found in linear time. Since $\cal
  R$ with the rectangles in $S$ extended is still a simple
  polygon, all holes of the complex after augmenting it by the
  four external rectangles are adjacent to the external
  rectangles. Hence there are no windmill holes. We can then start
  walking clockwise along the boundary of $\cal R$ at the first
  $d$-extended rectangle and extend all $d$-extensible rectangles
  until we reach the first $d+1$-extended rectangle. None of these
  rectangles is blocked in direction $d$. We close the corner
  between the $d$-side and the $d+1$-side by extending either the
  last $d$-extended rectangle in direction $d+1$ or vice versa,
  which is always possible since they cannot both block each
  other. We continue this process along all four sides of $\cal
  R$. Let $H$ be a remaining hole on the $d$-side. It has the
  property that none of its adjacent rectangles is $d$-extensible
  and that it is bounded by two staircases. We can then close the
  hole in linear time by simultaneously walking along the two
  staircases and maximally extending rectangles orthogonally to
  direction $d$.
\end {proof}

    Therefore, the engaged and extensible rectangles form a subset of
    rectangles for which Lemma~\ref{lem:contiguos} holds, thus by
    using the lemma we can find a rectangular extension where all
    extensible and engaged rectangles are extreme in the appropriate
    direction.

    Then we can apply this idea to each region.  Now we still have to
    match up the adjacencies in an optimal way, that is, preserving as
    many adjacencies from the input as possible.  This can be done by
    matching horizontal and vertical adjacencies independently.  It is
    always possible to get all the external bottom-level adjacencies
    that need to be preserved.  This can be seen as follows (see also
    Figure~\ref {fig:green-re-global+green-re-graph+green-re-out}).
    We process first all horizontal adjacencies.  Consider a complete
    stretch of horizontal boundary in the global layout.  Then the
    position and length of the boundary of each region adjacent to
    that boundary are fixed, from the global layout.  The only freedom
    left is in the $x$-coordinates of the vertical edges of the
    rectangles that form part of that boundary (except for the
    leftmost and rightmost borders of each region, which are also
    fixed).  Since the adjacencies that want to be preserved are part
    of the input, it is always possible to set the $x$-coordinates in
    order to fulfill them all.  The same can be done with all
    horizontal boundaries.  The vertical boundaries are independent,
    thus can be processed in exactly the same way. This yields the
    main theorem in this subsection. 

    \drieplaatjes {green-re-global} {green-re-graph} {green-re-out} {(a) After we solved all the different colors separately, we don't necessarily have the right adjacencies yet. (b) We can indicate their desired adjacencies that are still possible (so, the adjacencies between two edges of rectangles that actually ended up on the outside) as a graph. Note that this graph is planar. (c) We can poke around in the insides of the rectangles to make all desired adjacencies happen.}

    \begin {theorem}
      Let $\cal T$ be a $2$-level treemap, where $n$ is the number of
      cells in the bottom level, and where all global regions are
      orthogonally convex. For a given global target layout $\cal L$, we can
      find, in $O(n)$ time, a rectangular layout of $\cal T$ that
      respects $\cal L$, preserves all oriented internal bottom-level
      adjacencies, and preserves as many oriented external
      bottom-level adjacencies as possible.
    \end {theorem}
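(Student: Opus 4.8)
The plan is to derive the theorem by stitching together the structural lemmas of this subsection with one final global-assembly argument, checking linear time throughout. First I would pin down the \emph{upper bound}: which oriented external bottom-level adjacencies can possibly survive. Take such an adjacency, between $R \in \cal R$ and $B \in \cal B$, say with $R$ to the left of $B$. In any rectangular output the boundary between the regions $\cal R$ and $\cal B$ has a single orientation, fixed by $\cal L$, so the adjacency can be kept only if $\cal R$ lies immediately to the left of $\cal B$ in $\cal L$; moreover $R$ must land on the right boundary of $\cal R$'s output rectangle and $B$ on the left boundary of $\cal B$'s, so by Lemma~\ref{lem:extensible} $R$ must be right-extensible in $\cal R$ and $B$ left-extensible in $\cal B$. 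Hence no layout respecting $\cal L$ can preserve more than the set of external adjacencies that are simultaneously \emph{engaged} (oriented consistently with $\cal L$) and extensible on both sides; this is the quantity our output must match.

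Next I would show the bound is attained. For a region $\cal R$ with $r$ cells, the engaged-and-extensible rectangles form a subset $S$ of the extensible rectangles of $\cal R$; by Lemma~\ref{lem:given-global-layout} the $d$-engaged rectangles are encountered consecutively along the boundary of $\cal R$, one block per direction in clockwise order. That is precisely the hypothesis of Lemma~\ref{lem:contiguos}, so in $O(r)$ time we obtain a rectangular extension $\cal R'$ of $\cal R$ in which every $d$-extensible rectangle of $S$ is extreme in direction $d$. Applying this to all regions costs $\sum_{\cal R} O(r) = O(n)$ time, and after it every desired adjacency is realized \emph{internally} to its two regions, i.e., both partner rectangles already sit on the correct side of their respective output rectangles. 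Internal bottom-level adjacencies are never disturbed by these extensions, so by the first observation they are all preserved.

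The remaining step — the one I expect to be the main obstacle — is the \emph{global assembly}: positioning the $\cal R'$ so that they tile the root box exactly as prescribed by $\cal L$ while the two partner rectangles of each desired adjacency genuinely meet, not merely face the right way. Here I would use that $\cal L$ already fixes the position and length of every region's boundary, leaving as the only degrees of freedom the coordinates of the internal vertical (resp. horizontal) edges of the boundary rectangles of each incident region. I would process each maximal horizontal stretch of $\cal L$ separately: the external adjacencies we must realize across that stretch are inherited from the input layout and therefore come with a consistent left-to-right order on each side (the cells externally adjacent to a region are consecutive along its boundary), so one can choose the $x$-coordinates of those rectangles' vertical edges to satisfy all of them at once, the only pinned edges being the extreme left/right borders of each region. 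The identical argument handles each maximal vertical stretch, and since horizontal and vertical boundary stretches are edge-disjoint the two sets of choices do not interfere. The result respects $\cal L$ by construction, preserves all oriented internal adjacencies, and preserves exactly the engaged-and-extensible external adjacencies, which by the first paragraph is the optimum; the assembly is linear in the number of boundary rectangles, so the whole procedure runs in $O(n)$ time. (As noted just before Section~\ref{sec:preserve}, each external adjacency we fail to keep is replaced by a new, spurious one, but this does not affect optimality of the preserved count.)
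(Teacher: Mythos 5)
Your proposal is correct and follows essentially the same route as the paper: the same two necessary conditions (orientation consistency with the global layout plus extensibility on both sides) give the upper bound, Lemmas~\ref{lem:given-global-layout} and~\ref{lem:contiguos} realize all engaged-and-extensible rectangles on their boundaries in $O(n)$ total time, and the final coordinate assignment along horizontal and vertical boundary stretches (processed independently) matches the partners up. The only cosmetic difference is that the necessity of extensibility follows directly from its definition rather than from Lemma~\ref{lem:extensible}, which is only needed for the sufficiency direction.
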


\subsection {Unconstrained global layout}

In this section the global target layout of the rectangle complexes
is not given, i.e., we are given a rectilinear input layout and need
to find a rectangular output layout preserving all adjacencies of the
rectangle complexes and preserving a maximum number of
adjacencies of the cells of different complexes.

We can represent a particular rectangular global layout $\cal L$ as a
\emph{regular edge labeling}~\cite{kh-relfc-97} of the dual
graph $G(\cal L)$ of the global layout.
Let $G(\cal L)$ be
the extended dual graph of $\cal L$. Then $\cal L$
induces an edge labeling as follows: an edge corresponding to a joint
vertical (horizontal) boundary of two rectangular complexes is colored
blue (red). Furthermore, blue edges are directed from left to right
and red edges from bottom to top. Clearly, the edge labeling obtained
from $\cal L$ in this way satisfies that around each inner vertex $v$
of $G(\cal L)$ the incident edges with the same color and the same
direction form contiguous blocks around $v$. The edges incident to one
of the external vertices $\{l,t,r,b\}$ all have the same label. Such
an edge labeling is called \emph{regular}~\cite{kh-relfc-97}. Each
regular edge labeling of the extended dual graph $G(\cal L)$
defines an equivalence class of global layouts.

In order to represent the family of all possible rectangular global
layouts we apply a technique described by Eppstein et
al.~\cite{emsv-aurl-09,em-ocrl-09}. Let $\cal L$ be the rectilinear
global input layout and let $G(\cal L)$ be its extended dual
graph. The first step is to decompose $G(\cal L)$ by its separating
4-cycles into minors called \emph{separation components} with the
property that they do not have non-trivial separating 4-cycles any
more, i.e., 4-cycles with more than a single vertex in the inner part
of the cycle. If $C$ is a separating 4-cycle the interior separation
component consists of $C$ and the subgraph induced by the vertices
interior to $C$. The outer separation component is obtained by
replacing all vertices in the interior of $C$ by a single vertex
connected to each vertex of $C$. This decomposition can be obtained in linear
time~\cite{emsv-aurl-09}. We can then treat each component in the
decomposition independently and finally construct an optimal
rectangular global layout from the optimal solutions of its
descendants in the decomposition tree. So let's consider a single
component of the decomposition, which by construction has no
non-trivial separating 4-cycles.

%
%

\subsubsection {Preprocessing of the bottom level}

We start with a preprocessing step to compute the number of realizable
external bottom-level adjacencies for pairs of adjacent global
regions. This allows us to ignore the bottom-level cells in later
steps and to focus on the global layout and orientations of global
adjacencies.

Let $\cal L$ be a global layout, let $\cal R$ and $\cal S$ be two
adjacent rectangle complexes in $\cal L$, and let $d \in \D$ be an
orientation.  Then we define $\omega(\mathcal{R},\mathcal{S},d)$ to be
the total number of adjacencies between $d$-engaged and $d$-extensible
rectangles in $\mathcal{R}$ and $-d$-engaged and $-d$-extensible
rectangles in $\mathcal{S}$. By Lemma~\ref{lem:contiguos} there is a
rectangular layout of $\cal R$ and $\cal S$ with exactly
$\omega(\mathcal{R},\mathcal{S},d)$ external bottom-level adjacencies
between $\cal R$ and $\cal S$.

  We show the following (perhaps surprising) lemma:

\newcommand{\textlemmaweights}{
For any pair $\cal L$ and $\cal L'$ of global layouts and any pair
$\cal R$ and $\cal S$ of rectangular rectangle complexes, whose
adjacency direction with respect to $\cal R$ is $d$ in $\cal L$
and $d'$ in $\cal L'$ the number of external bottom level
adjacencies between $\cal R$ and $\cal S$ in any optimal solution
for $\cal L'$ differs by $\omega (\mathcal{R},\mathcal{S},d') -
\omega (\mathcal{R},\mathcal{S},d)$ from $\cal L$. For adjacent
rectangle complexes whose adjacency direction is the same in both
global layouts the number of adjacencies in any optimal solution
remains the same.

}

  \begin {lemma}\label{lem:weights}
    \textlemmaweights
%
  \end {lemma}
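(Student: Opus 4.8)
The plan is to decouple the bottom-level adjacency count from the rest of the layout by showing it depends only on the orientation $d$ of the $\cal R$--$\cal S$ adjacency, and nothing else about $\cal L$ or $\cal L'$. The key observation is that by Lemma~\ref{lem:extensible}, whether a rectangle $R\in\cal R$ is $d$-extensible depends only on the internal structure of $\cal R$ (namely, whether some rectangle of $\cal R$ sits directly on the $d$-side of $R$), and similarly whether $R$ is $d$-engaged depends only on the input layout $\cal L$ and the direction of the $\cal R$--$\cal S$ adjacency. Neither notion depends on the global positions of the other complexes. Hence $\omega(\mathcal R,\mathcal S,d)$ is well-defined purely in terms of $\cal R$, $\cal S$, and $d$.

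First I would argue that in any optimal solution for a global layout $\cal L$, the number of external bottom-level adjacencies realized between $\cal R$ and $\cal S$ is exactly $\omega(\mathcal R,\mathcal S,d)$, where $d$ is the direction of their adjacency in $\cal L$. The upper bound is immediate: by the two necessary conditions established at the start of Section~\ref{sec:preserve}, an input adjacency between $R\in\cal R$ and $B\in\cal S$ can survive only if $R$ is $d$-engaged and $d$-extensible and $B$ is $-d$-engaged and $-d$-extensible, so at most $\omega(\mathcal R,\mathcal S,d)$ such adjacencies can be present. The matching lower bound follows from Lemma~\ref{lem:contiguos} together with the matching argument given just before the theorem of the previous subsection: the engaged-and-extensible rectangles on each side form a contiguous block in the correct clockwise order (Lemma~\ref{lem:given-global-layout}), so we can simultaneously make them all extreme, and then fix the coordinates of their free edges along the shared boundary so that all $\omega(\mathcal R,\mathcal S,d)$ desired adjacencies are realized. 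Crucially, this construction is local to the pair $\cal R$, $\cal S$ and the horizontal (resp.\ vertical) boundary stretch containing their shared edge, so it does not interfere with the optimal realization of any other pair.

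Given this, the lemma is just arithmetic: the total number of external bottom-level adjacencies in an optimal solution for $\cal L$ is $\sum \omega(\mathcal R,\mathcal S,d_{\mathcal R\mathcal S})$, summed over adjacent pairs of complexes, where $d_{\mathcal R\mathcal S}$ is their direction in $\cal L$; likewise for $\cal L'$. Subtracting, every pair whose direction is unchanged contributes $0$, and a pair whose direction changes from $d$ to $d'$ contributes exactly $\omega(\mathcal R,\mathcal S,d')-\omega(\mathcal R,\mathcal S,d)$ to the difference, which is precisely the claim.

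The main obstacle is justifying that the per-pair optimum $\omega(\mathcal R,\mathcal S,d)$ is simultaneously achievable across all pairs without conflict — i.e., that the global optimum really is the sum of the local optima. This requires observing that the only shared resource between two pairs of complexes is a common boundary stretch, and along any such stretch the adjacency-matching step (poking the interior vertical edges for a horizontal stretch, and symmetrically for vertical stretches) has enough degrees of freedom to satisfy all input adjacencies simultaneously, exactly as in the proof leading to the theorem of Section~\ref{sec:preserve}. Once that independence is in hand, the rest is bookkeeping.
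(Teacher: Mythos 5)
Your proof is correct and takes essentially the same route as the paper's own (much terser) argument: $\omega(\mathcal{R},\mathcal{S},d)$ is an upper bound on the realizable $\mathcal{R}$--$\mathcal{S}$ adjacencies by the engaged/extensible necessary conditions and is achieved via Lemma~\ref{lem:contiguos}, so the per-pair optima are simultaneously attainable, the global optimum decomposes as their sum, and the difference formula follows by subtraction. If anything, you justify the simultaneous-achievability and locality step more explicitly than the paper does.
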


\begin{proof}
  The value $\omega (\mathcal{R},\mathcal{S},d)$ is the maximum number
  of external bottom-level adjacencies between $\cal R$ and $\cal S$
  that can be realized if $\cal S$ is adjacent to $\cal R$ in
  direction $d$. By Lemma~\ref{lem:contiguos} there is a rectangular
  extension of the global layout $\cal L$ in which this number of
  adjacencies between $\cal R$ and $\cal S$ is realized. So clearly
  the difference in $\cal R$-$\cal S$ adjacencies is $\omega
  (\mathcal{R},\mathcal{S},d') - \omega
  (\mathcal{R},\mathcal{S},d)$. Adjacent pairs of rectangle complexes
  whose adjacency direction remains the same are not affected by
  changes of adjacency directions elsewhere in the global layout.
\end{proof}

This basically means we can consider changes of adjacency directions
locally and independent from the rest of the layout.
%
%
Furthermore, since the values $\omega (\mathcal{R},\mathcal{S},d)$ are
directly obtained from counting the numbers of $d$-extensible and $d$-engaged
rectangles in $\mathcal{R}$ (or $-d$-extensible and $-d$-engaged
rectangles in $\cal S$) we get the next lemma.

  \begin {lemma}
    We can compute all values $\omega (\mathcal{R},\mathcal{S},d)$ in
    $O (n)$ total time.
  \end {lemma}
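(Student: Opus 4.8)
The plan is to show that a single linear-time pass over all rectangle complexes of the input layout $\cal L$ suffices to compute every value $\omega(\mathcal{R},\mathcal{S},d)$. The key observation is that $\omega(\mathcal{R},\mathcal{S},d)$ was \emph{defined} as a count of adjacencies between $d$-engaged and $d$-extensible rectangles of $\cal R$ and $-d$-engaged and $-d$-extensible rectangles of $\cal S$, and that by Lemma~\ref{lem:extensible} the property of being $d$-extensible is purely local: a cell $R$ of a convex complex $\cal R$ is $d$-extensible precisely when no cell of $\cal R$ lies directly on the $d$-side of $R$, which can be read off in constant time per adjacency using the bidirectional neighbor pointers of the rectangle-complex data structure. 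Likewise, whether a boundary cell is ``engaged'' in a given direction depends only on its external adjacencies and the orientation of the regional adjacency they realize.

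First I would iterate over the cells of each complex $\cal R$ and, for every boundary cell $R$ and every direction $d\in\D$, determine in $O(1)$ time whether $R$ is $d$-extensible (by checking for a same-complex neighbor on its $d$-side). This labels each boundary cell with the (at most two, by orthogonal convexity) directions in which it is extensible. Next I would walk along each external adjacency of the input layout exactly once: such an adjacency joins a cell $R\in\cal R$ to a cell $B\in\cal S$ with some orientation $d$; it contributes $1$ to $\omega(\mathcal{R},\mathcal{S},d)$ if and only if $R$ is $d$-extensible and $B$ is $(-d)$-extensible, which we have already precomputed. Accumulating these contributions into a table indexed by the ordered pair $(\mathcal{R},\mathcal{S})$ and the orientation $d$ yields all the $\omega$ values. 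Since each external adjacency of $\cal L$ is examined a constant number of times, and there are $O(n)$ cells and $O(n)$ adjacencies in a layout of $n$ bottom-level cells, the total running time is $O(n)$. Note we do not need to construct any of the candidate rectangular extensions; Lemma~\ref{lem:contiguos} already guarantees that the counted number of adjacencies is simultaneously realizable, so it is enough to count.

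The only mild subtlety — which I would address rather than treat as a serious obstacle — is bookkeeping: for a fixed pair $(\mathcal{R},\mathcal{S})$ there may be several external adjacencies of differing orientations between them, so the table must be keyed by orientation as well, and one must be careful that a cell engaged toward $\cal S$ in direction $d$ is counted toward $\omega(\mathcal{R},\mathcal{S},d)$ and not toward some other neighbor or direction. This is handled naturally by iterating over adjacencies (each of which pins down $\cal R$, $\cal S$, and $d$) rather than over cells. Aggregating into a dictionary over pairs $(\mathcal{R},\mathcal{S})$ that actually occur keeps the space linear in the number of global adjacencies and avoids any quadratic blow-up in the number $k$ of regions.
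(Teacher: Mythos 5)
Your proof is correct and takes essentially the same approach as the paper, which justifies the lemma in a single sentence by noting that the $\omega(\mathcal{R},\mathcal{S},d)$ values are obtained by directly counting $d$-extensible and $d$-engaged rectangles; your version simply makes explicit the linear-time pass using the local characterization of extensibility from Lemma~\ref{lem:extensible} and one constant-time check per external adjacency. (Your parenthetical claim that convexity limits a cell to at most two extensible directions is not quite right---an end cell of a single row is extensible in three directions---but this plays no role in the argument.)
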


\subsubsection {Optimizing in a graph without separating 4-cycles}

  Here we will prove the following:


  \begin {theorem} \label {thm:no4cycles}
    Let $G$ be an embedded triangulated planar graph with
    $k'$ vertices without separating $3$-cycles and without
    non-trivial separating $4$-cycles, except for the outer face which
    consists of exactly four vertices. Furthermore, let a weight
    $\omega (e, d)$
    be assigned to every edge $e$ in $G$ and every orientation $d$ in
    $\D$.  Then we can find a rectangular subdivision of which $G$ is
    the extended dual that maximizes the total weight of the directed
    adjacencies in $O (k'^4 \log k')$ time. 
  \end {theorem}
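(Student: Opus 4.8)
The plan is to enumerate regular edge labelings of $G$ via the known flip structure, but to do so efficiently by exploiting the absence of non-trivial separating 4-cycles. Recall from Eppstein et al.~\cite{emsv-aurl-09,em-ocrl-09} that the set of all regular edge labelings of a fixed triangulated graph forms a distributive lattice, whose elements are connected by \emph{color-swapping flips} at \emph{alternating four-cycles} (and, at the outer face, by single-edge reorientations). Since $G$ has no non-trivial separating 4-cycles, the only four-cycles available for flips are the separating 4-cycles with a single interior vertex, i.e.\ the neighborhoods of the degree-4 interior vertices, plus the outer 4-cycle. I would first argue that there are only $O(k')$ such flippable four-cycles, so the lattice has a bounded ``width'' of elementary moves.

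Next I would set up the optimization. Each regular edge labeling assigns to every edge $e$ one of the two orientations (which of the two complexes lies on the $d$-side), hence a total weight $\sum_e \omega(e,d_e)$; by Lemma~\ref{lem:weights} this is exactly the number of preserved oriented external bottom-level adjacencies (up to the additive constant from the ``unchanged-direction'' pairs), so maximizing this sum over all regular edge labelings solves the problem. The key structural observation I would prove is that flipping at a single degree-4 vertex $v$ changes the orientations (colors/directions) of only the four edges of its surrounding 4-cycle, and that these local flips commute in the appropriate sense: the labeling is determined by, for each flippable four-cycle, a binary ``state'' together with the single outer-face reorientation. Because $G$ is 4-cycle-free in the non-trivial sense, the interactions between these flip sites are limited, and I would show the reachable set of labelings decomposes so that we can optimize over each coordinate almost independently, or at worst by a dynamic program over a structure of size $O(k')$.

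To reach the claimed $O(k'^4 \log k')$ bound I would not try to be clever about independence but instead enumerate: the distributive lattice of regular edge labelings of a graph with $O(k')$ flip sites has height $O(k'^2)$ (each flip changes a potential function by one, and the potential is bounded by the number of edges times something linear), and one can walk the whole lattice by repeatedly applying ``down'' flips, visiting each of at most $O(k'^2)$ levels. Within a level, the number of distinct labelings reachable is polynomial; a careful count using the no-separating-4-cycle condition gives $O(k'^2)$ labelings per ``frontier'', for $O(k'^4)$ labelings in total, and evaluating the weight of each incrementally after a flip costs $O(\log k')$ with a suitable balanced-tree bookkeeping of the running sum (or simply $O(1)$ amortized plus an $O(\log k')$ factor for locating the affected edges). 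Taking the maximum over all enumerated labelings, and then (for the outer face) the better of the up-to-two reorientations, yields the optimum; finally the whole enumeration is run inside each separation component and the results are combined up the decomposition tree in linear time, which is subsumed by the stated bound.

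The hard part will be the precise enumeration count: showing that the no-non-trivial-separating-4-cycle hypothesis really forces only $O(k'^4)$ regular edge labelings to be examined (rather than exponentially many), and that they can be traversed by local flips without revisiting, so that the total time is $O(k'^4 \log k')$ rather than merely the number of labelings times the per-labeling cost. Everything else — correctness via Lemma~\ref{lem:weights}, the reduction to edge-weighted labelings via the $\omega$ values, the outer-face special case, and the linear-time recombination over the 4-cycle decomposition — is routine once the enumeration lemma is in place.
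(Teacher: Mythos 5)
Your proposal starts from the right place (the distributive lattice of regular edge labelings and the local flip moves), but the core of your argument --- that the absence of non-trivial separating $4$-cycles forces only $O(k'^4)$ regular edge labelings, which can then be enumerated one by one --- is the step that fails, and you correctly flag it as ``the hard part.'' It cannot be established: even for graphs with no separating $4$-cycles at all (e.g.\ duals of generic rectangular layouts), the number of regular edge labelings grows exponentially in $k'$. What is polynomially bounded is not the number of lattice \emph{elements} but the number of its \emph{join-irreducibles}: by Birkhoff's representation theorem each labeling corresponds to a downward-closed subset of a partial order $\cal P$ on pairs $(x,i)$, where $x$ is a flippable item (an edge or a degree-$4$ vertex) and $i$ its flipping number, and $\cal P$ has only $O(k'^2)$ elements. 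The downward-closed subsets of $\cal P$ --- i.e.\ the labelings --- can still number $2^{\Theta(k'^2)}$, so any enumeration scheme, however cleverly traversed, cannot meet the stated time bound.

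The way out, which is the paper's route, is to observe that the objective is \emph{modular} over the lattice: the total weight of a layout equals the weight of $\mathcal{L}_{\min}$ plus the sum of the weight changes $\omega(x,i)$ of the flips in the corresponding lower set (this is where Lemma~\ref{lem:weights} is used --- each flip changes only the adjacency directions local to it, by $\omega(\mathcal R,\mathcal S,d')-\omega(\mathcal R,\mathcal S,d)$, independently of the rest of the layout). Maximizing a sum of node weights over all downward-closed sets of a DAG is the classical maximum-weight closure problem, solvable by a single max-flow computation; on the $O(k'^2)$-node poset $\cal P$ this gives $O(k'^4\log k')$. Two smaller corrections to your setup: an edge does not carry a binary state but cycles through all four directions of $\D$ as it is flipped repeatedly (hence the second coordinate $i$ in $(x,i)$), and the no-separating-$4$-cycle hypothesis is what guarantees that edge flips and vertex flips generate the full lattice on each separation component --- it plays no role in bounding the number of labelings.
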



  In order to optimize over all rectangular subdivisions with the same
  extended dual graph we make use of the representation of these
  subdivisions as elements in a distributive lattice or, equivalently,
  as closures in a partial order induced by this lattice~\cite{emsv-aurl-09,em-ocrl-09}. There are two \emph{moves}
  or \emph{flips} by which we can transform one rectangular layout (or
  its regular edge labeling) into another one,  \emph{edge flips} and \emph{vertex flips} (Figure~\ref{fig:flips}). They form a graph where each
  equivalence class of rectangular layouts is a vertex and two
  vertices are connected by an edge if they are transformable into
  each other by a single move, with the edge directed toward the more
  counterclockwise layout with respect to this move. This graph is
  acyclic and its reachability ordering is a distributive lattice~\cite{Fus-DM-09}. It
  has a minimal (maximal) element that is obtained by repeatedly
  performing clockwise (counterclockwise) moves.

  By Birkhoff's representation theorem~\cite{Bir-DMJ-37} each element in
  this lattice is in one-to-one correspondence to a partition of a
  partial order $\cal P$ into an upward-closed set $U$ and a
  downward-closed set $L$. The elements in $\cal P$ are pairs $(x,i)$,
  where $x$ is a flippable item, i.e., either the edge of an edge flip
  or the vertex of a vertex flip~\cite{emsv-aurl-09,em-ocrl-09}. The integer $i$ is the so-called
  flipping number $f_x(\cal L)$ of $x$ in a particular layout $\cal
  L$, i.e., the well-defined number of times flip $x$ is performed
  counterclockwise on any path from the minimal element
  $\mathcal{L}_{\min}$ to $\cal L$ in
  the distributive lattice. An element $(x,i)$ is smaller than another
  element $(y,j)$ in this order if $y$ cannot be flipped for the
  $j$-th time before $x$ is flipped for the $i$-th time. For each
  upward- and downward-closed partition $U$ and $L$, the corresponding
  layout can be reconstructed by performing all flips in the lower set
  $L$. $\cal P$ has $O(k'^2)$ vertices
  and edges and can be constructed in $O(k'^2)$ time~\cite{emsv-aurl-09,em-ocrl-09}. The construction
  starts with an arbitrary layout, performs a sequence of clockwise
  moves until we reach $\mathcal{L}_{\min}$, and from there performs a
  sequence of counterclockwise moves until we reach the maximal
  element. During this last process we count how often each element is
  flipped, which determines all pairs $(x,i)$ of $\cal P$. Since each
  flip $(x,i)$ affects only those flippable items that belong to the
  same triangle as $x$, we can initialize a queue of possible flips,
  and iteratively extract the next flip and add the new flips to the
  queue in total time $O(k'^2)$. In order to create the edges in $\cal
  P$ we again use the fact that a flip $(x,i)$ depends only on flips
  $(x',i')$, where $x'$ belongs to the same triangle as $x$ and $i'$
  differs by at most 1 from $i$. The actual dependencies can be
  obtained from their states in $\mathcal{L}_{\min}$.

\begin{figure}[tb]
  \centering
  \subfigure[edge flip $BD$]{\includegraphics[page=1]{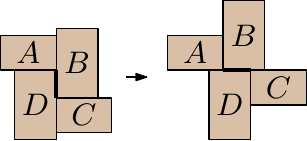}}
  \hfil
  \subfigure[vertex flip $E$]{\includegraphics[page=2]{figures/flips}}
  \caption{Flip operations}
  \label{fig:flips}
\end{figure}


Next, we assign weights to the nodes in $\cal P$. Let $\cal L_{\min}$
be the layout that is minimal in the distributive lattice, i.e., the
layout where no more clockwise flips are possible. For an edge-flip
node $(e,i)$ let $\cal R$ and $\cal S$ be the two rectangle complexes
adjacent across $e$. Then the weight $\omega(e,i)$ is obtained as
follows. Starting with the adjacency direction between $\cal R$ and
$\cal S$ in $\cal L_{\min}$ we cycle $i$ times through the set $\D$ in
counterclockwise fashion. Let $d$ be the $i$-th direction and $d'$ the
$(i+1)$-th direction. Then $\omega(e,i) = \omega(e,d') =
\omega(\mathcal{R},\mathcal{S},d') -
\omega(\mathcal{R},\mathcal{S},d)$. For a vertex-flip node $(v,i)$ let
$\cal R$ be the degree-4 rectangle complex surrounded by the four
complexes $\mathcal{S}_1, \dots, \mathcal{S}_4$. We again determine
the adjacency directions between $\cal R$ and $\mathcal{S}_1, \dots,
\mathcal{S}_4$ in $\cal L_{\min}$ and cycle $i$ times through $\D$ to
obtain the $i$-th directions $d_1, \dots, d_4$ as well as the $(i+1)$-th
directions $d'_1, \dots, d'_4$. Then $\omega(v,i) = \sum_{j=1}^4
\omega(\mathcal{R},\mathcal{S}_j,d'_j) -
\omega(\mathcal{R},\mathcal{S}_j,d_j)$. Equivalently, if the four edges incident to
$v$ are $e_1, \dots, e_4$, we have $\omega(v,i) = \sum_{j=1}^4
\omega(e_j,d'_j)$.



Finally, we compute a maximum-weight closure of $\cal P$ using a
max-flow algorithm~\cite[Chapter 19.2]{amo-nf-93}, which will take $O
(k'^4 \log k')$ time for a graph with $O(k'^2)$ nodes.


\subsubsection {Optimizing in General Graphs}

In this section, we show how to remove the restriction that the graph
should have no separating $4$-cycles. We do this by decomposing the graph $G$ by its separating $4$-cycles and solving the subproblems in a bottom-up fashion.

  \tweeplaatjes {dectree-graph} {dectree-tree-hor} {(a) A graph with non-trivial separating $4$-cycles. Note that some $4$-cycles intersect each other. (b) A possible decomposition tree of $4$-cycle-free graphs (root on the left).}

  \begin {lemma}[Eppstein et al.~\cite{emsv-aurl-09}]
    Given a plane graph $G$ with $k$ vertices, there exists a
    collection $\cal C$ of separating 4-cycles in $G$ that decomposes
    $G$ into separation components that do not contain separating
    4-cycles any more. Such a collection $\cal C$ and the
    decomposition can be computed in $O(k)$ time.
    %
  \end {lemma}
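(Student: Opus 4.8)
The idea is to take $\cal C$ to be an inclusion-maximal \emph{laminar} family of separating $4$-cycles of $G$, where two cycles are laminar if their open interiors are disjoint or nested (never crossing), and then to cut $G$ along the members of $\cal C$ exactly as described above: for each $C \in \cal C$ one forms the interior component ($C$ together with the subgraph it encloses) and the exterior component ($G$ with the interior of $C$ collapsed to a single apex vertex joined to the four vertices of $C$). I would then prove that after all these cuts no separation component contains a non-trivial separating $4$-cycle, and that the whole construction runs in $O(k)$ time. Throughout we may restrict attention to the graphs actually arising here (plane, triangulated apart from a quadrilateral outer face, no separating triangle), which is where all the separating $4$-cycles we must eliminate live.

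\textbf{Correctness.} The heart is a structural dichotomy: any two separating $4$-cycles of $G$ are either laminar or \emph{cross}, and a crossing pair shares exactly two vertices, opposite on each cycle, whose link around the shared pair is so rigid that one of the two cycles is redundant for the decomposition. Granting this, suppose some component $H$ produced by cutting along $\cal C$ still had a non-trivial separating $4$-cycle $C_H$. I would lift $C_H$ back to a $4$-cycle $\widehat C$ of $G$ by re-expanding any apex vertex of $H$ used by $C_H$ into the non-empty piece it represents, and check --- using that $H$ inherits ``no separating triangle'' --- that $\widehat C$ is again a separating $4$-cycle enclosing at least two genuine vertices, hence non-trivial. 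Since $C_H$ sits strictly inside the single region of the decomposition that corresponds to $H$, its lift $\widehat C$ is contained in that region's closure and is therefore laminar with every member of $\cal C$. Thus $\cal C \cup \{\widehat C\}$ would be a strictly larger laminar family, contradicting maximality; so no such $C_H$ exists.

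\textbf{Running time.} Two facts give the $O(k)$ bound. First, the interior vertex sets of the cycles in a laminar family form a laminar set system on the $\le k$ vertices, so $|\cal C| = O(k)$, the decomposition tree has $O(k)$ nodes, and the components are pairwise vertex-disjoint apart from the $4+1$ shared vertices per cut, totalling $O(k)$ vertices and edges. Second, one can \emph{find} such a maximal family in linear time from the planar embedding: each separating $4$-cycle is pinned down by an opposite pair $\{u,v\}$ together with a choice of side, so a single pass over the adjacency lists (kept in cyclic embedding order) enumerates the relevant cycles; among cycles through a common pair one keeps a nested subchain, and the retained cycles are processed from outermost inward to build the outer/inner components. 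This is exactly the linear-time decomposition of~\cite{emsv-aurl-09}, which may also simply be invoked.

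\textbf{Main obstacle.} I expect the real difficulty to be the structural step --- making precise and proving that crossing separating $4$-cycles are dispensable, and that a surviving non-trivial separating $4$-cycle in a component always pulls back to a genuine non-trivial separating $4$-cycle of $G$ that is laminar with $\cal C$. The bookkeeping around shared edges, apex vertices, and the interaction with the ``no separating triangle'' and quadrilateral-outer-face hypotheses is where the work concentrates; once that is settled, the linear-time implementation is careful but routine planarity processing.
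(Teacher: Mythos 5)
The paper does not prove this lemma at all: it is imported verbatim from Eppstein et al.~\cite{emsv-aurl-09} and used as a black box, so there is no in-paper proof to compare your argument against. Judged on its own, your proposal is a plausible reconstruction of how such a decomposition is obtained (cut along a suitable family of separating $4$-cycles, form interior/exterior components with an apex vertex, argue no non-trivial separating $4$-cycle survives), but it is a plan rather than a proof, and the step you yourself flag as the ``main obstacle'' is a genuine gap, not a routine verification.

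Concretely, the maximal-laminar-family argument breaks exactly where crossing cycles enter. If $C_1$ crosses a member $C_0$ of your family $\cal C$, with say one vertex of $C_1$ strictly inside $C_0$ and two vertices shared with $C_0$, then in the outer separation component $C_1$ reappears as a $4$-cycle through the apex vertex, and it can still be separating and non-trivial. Its natural lift back to $G$ is $C_1$ itself, which \emph{crosses} $C_0$ rather than being laminar with it --- so your claim that the lift ``is contained in that region's closure and is therefore laminar with every member of $\cal C$'' fails precisely in this case, and you get no contradiction with maximality. Closing this requires the structural lemma you deferred (that crossing separating $4$-cycles are rigid enough that one of the two is dispensable, or that such apex $4$-cycles cannot be non-trivially separating given the no-separating-triangle hypothesis), and that is where essentially all the content of the cited result lives. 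The $O(k)$ running-time claim is similarly unsubstantiated: a single pass over cyclically ordered adjacency lists does not obviously enumerate all separating $4$-cycles, let alone select a maximal laminar subfamily, in linear time. Since the paper itself simply invokes~\cite{emsv-aurl-09}, the honest options are to do the same or to actually prove the crossing/lifting lemmas; as written, your sketch does neither.
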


  These cycles naturally subdivide $G$ into a tree of subgraphs, which
  we will denote as $T_G$. Still following~\cite {emsv-aurl-09}, we
  add an extra artificial vertex inside each $4$-cycle, which
  corresponds to filling the void in the subdivision after removing
  all rectangles inside by a single rectangle.  Figure~\ref
  {fig:dectree-graph+dectree-tree-hor} shows an example of a graph $G$ and
  a corresponding tree $T_G$.

  Now, all nodes of $T_G$ have an associated graph without separating
  $4$-cycles on which we can apply Theorem~\ref {thm:no4cycles}. The
  only thing left to do is assign the correct weights to the edges of
  these graphs.  For a given node $\nu$ of $T_G$, let $G_\nu$ be the
  subgraph of $G$ associated to $\nu$ (with potentially extra vertices
  inside its $4$-cycles).

  For every leave $\nu$ of $T_G$, we assign weights to the internal
  edges of $G_\nu$ by simply setting $\omega(e,d) = \omega (\mathcal
  R, \mathcal S, d)$ if $e$ separates $\mathcal R$ and $\mathcal S$ in
  the global layout $\mathcal L$.  For the external edges of $G_\nu$
  (the edges that are incident to one of the ``corner'' vertices of
  the outer face), we fix the orientations in the four possible ways,
  leading to four different problems. We apply Theorem~\ref
  {thm:no4cycles} four times, once for each orientation. We store the
  resulting solution values as well as the corresponding optimal
  layouts at $\nu$ in $T_G$.

  Now, in bottom-up order, for each internal node $\nu$ in $T_G$, we
  proceed in a similar way with one important change: for each child
  $\mu$ of $\nu$, we first look up the four optimal layouts of $\mu$
  and incorporate them in the weights of the four edges incident to
  the single extra vertex that replaced $G_\mu$ in $G_\nu$. Since
  these four edges must necessarily have four different orientations,
  their states are linked, and it does not matter how we distribute
  the weight over them; we can simply set the weight of three of these
  edges to $0$ and the remaining one to the solution of the
  appropriately oriented subproblem. The weights of the remaining
  edges are derived from $\mathcal L$ as before, and again we fix the
  orientations of the external edges of $G_\nu$ in four different ways
  and apply Theorem~\ref {thm:no4cycles} to each of them. We again
  store the resulting four optimal values and the corresponding
  layouts at $\nu$, in which we insert the correctly oriented
  subsolutions for all children $\mu$ of $\nu$.


  This whole process takes $O (k^4 \log k)$ time in the worst case.
  Finally, since weights are expressed as differences with respect to
  the minimal layout $\mathcal{L}_{\min}$ we compute the value of
  $\mathcal{L}_{\min}$ and add the offset computed as the optimal
  solution to get the actual value of the globally optimal
  solution. This takes $O (n)$ time.

  \begin {theorem} \label {thm:with4cycles} Let $\mathcal T$ be a
    $2$-level treemap, such that the extended dual graph $G$ of the
    global layout has no separating $3$-cycles. Let $n$ be the number
    of cells in the bottom level and $k$ the number of regions in the
    top level.  Then we can find a rectangular subdivision that
    preserves all oriented internal bottom-level adjacencies, and
    preserves as many oriented external bottom-level adjacencies as
    possible in $O (k^4 \log k + n)$ time. 
  \end {theorem}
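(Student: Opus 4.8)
The proof assembles the machinery of the preceding three subsections; the plan is as follows.

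\emph{Reducing to a weighted graph problem.} First I would run the bottom-level preprocessing to compute all values $\omega(\mathcal{R},\mathcal{S},d)$ in $O(n)$ total time. By Lemma~\ref{lem:weights} these values capture everything the bottom level contributes: once they are known, the number of oriented external bottom-level adjacencies realizable by a global layout is a purely combinatorial function of the orientations of the global adjacencies, and re-orienting one global adjacency changes the count additively and independently of the rest of the layout. This lets me discard the $n$ bottom-level cells for the optimization phase and work only with the extended dual graph $G$ of the global layout --- $k+4$ vertices, no separating $3$-cycle by hypothesis --- carrying weights $\omega(e,d)$ on its edges, where $\omega(e,d)=\omega(\mathcal{R},\mathcal{S},d)$ when $e$ separates complexes $\mathcal{R}$ and $\mathcal{S}$.

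\emph{Decomposition and bottom-up optimization.} Next I would apply the separating-$4$-cycle decomposition of Eppstein et al.\ to $G$, obtaining in $O(k)$ time the tree $T_G$ of separation components, each free of non-trivial separating $4$-cycles and, as one checks, still free of separating $3$-cycles, and insert the artificial vertex inside each $4$-cycle. I then process $T_G$ bottom-up exactly as sketched above. At a leaf I set the internal weights from $\omega$; since in any rectangular layout the four edges to the corner vertices of the outer face must carry the four distinct orientations, I fix these in each of the four cyclic ways and invoke Theorem~\ref{thm:no4cycles} once per way, storing the four optimal values and layouts. At an internal node $\nu$ I first fold in each child $\mu$ by loading the value of its appropriately oriented subproblem onto the four edges of the artificial vertex that replaced $G_\mu$ in $G_\nu$ (these four edges also take four distinct orientations, so how the weight is split among them is immaterial), then again try the four boundary orientations and call Theorem~\ref{thm:no4cycles}. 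At the root I keep the best of the four options and reconstruct the optimal global layout by recursively substituting the stored child layouts. Finally I feed this optimal global layout to the given-global-layout construction of the previous theorem, which in $O(n)$ time produces a $2$-level treemap preserving all oriented internal bottom-level adjacencies and realizing the promised optimal number of oriented external ones; since all weights were expressed relative to $\mathcal{L}_{\min}$, I add back the value of $\mathcal{L}_{\min}$, computable in $O(n)$ time.

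\emph{Correctness and running time.} Correctness rests on the decoupling property of separating $4$-cycles: once the orientations of a component's four boundary edges are fixed, the optimal realization inside a child $4$-cycle and the optimal realization of everything else are independent problems whose objective contributions simply add --- this is where Lemma~\ref{lem:weights} is used --- so the four stored values at each node are the true optima of the four constrained subproblems and the recombination yields a global optimum. I expect this decoupling argument, together with verifying that the four ``boundary orientation'' choices at each component are exactly the realizable ones, to be the step needing the most care. For the running time, the decomposition costs $O(k)$; the components have total size $O(k)$, but since the worst case is a single component of size $\Theta(k)$ the $O(1)$ calls to Theorem~\ref{thm:no4cycles} per node sum to $O(k^4\log k)$; the preprocessing and the $\mathcal{L}_{\min}$-offset computation each cost $O(n)$. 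In total this is $O(k^4\log k + n)$, as claimed.
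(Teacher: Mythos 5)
Your proposal is correct and follows essentially the same route as the paper: the same $O(n)$ preprocessing of the $\omega$-values justified by Lemma~\ref{lem:weights}, the same Eppstein-et-al.\ separating-$4$-cycle decomposition into a tree of components with an artificial vertex per cycle, the same bottom-up scheme trying the four boundary orientations and invoking Theorem~\ref{thm:no4cycles} at each node, and the same final $\mathcal{L}_{\min}$-offset and $O(n)$ realization step. No substantive differences to report.
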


\section {Without preserving orientations}


  In this section we study the variant of the problem where we do not
  need to preserve the orientations of the adjacencies that we preserve.
  We still assume that 
  the required global layout of the output treemap is
  given in advance.

We first define an adjacency graph on the boundary cells of all
  rectangle complexes. There is a vertex in this graph for each cell
  that belongs to the boundary of a rectangle complex.
  Since the global
  layout is given, we know where the four corners separating the
  boundary sides are. There is an \emph{internal adjacency} edge
  between any two vertices of the same rectangle complex whose cells
  are adjacent in the complex.
  There are \emph{external adjacency} edges between vertices of
  cells of different rectangle complexes if the cells are adjacent in
  the input.

Next we note that the adjacency graph as a 
  subgraph of the dual graph for the input layout is a planar graph.
Our goal is to select subsets of the vertices to be on the
  boundary of each rectangle complex whose induced subgraph has as
  many external adjacency edges as possible but also would not create
  any separating triangles if we imagine connecting all the external adjacencies corresponding to one boundary to one vertex.

  For the remainder we restrict us to the case of two top-level regions, and only at the very end extend the arguments to more regions.
  In our subgraph of the adjacency graph we remove those internal adjacency edges that
  correspond to two directly neighboring cells when traversing the
  boundary of the corresponding rectangle complex.
    In the remaining
  graph, we need to find an independent set in terms of the remaining
  internal adjacency edges, since any two adjacent vertices in this
  graph would induce a separating triangle. We first prove that preserving as many bottom-level external adjacencies as possible is NP-hard already for the case of two top-level regions.

\subsection{NP-hardness}
In positive 1-in-3-SAT each clause contains exactly three (non-negated) variables, and we need to decide whether there is a truth assignment such that exactly one variable per clause is true. As input we are given the collection of clauses together with a planar embedding of the associated graph such that all variables are on a straight line and no edge crosses the straight line. This problem was shown to be NP-complete by Mulzer and Rote~\cite{mr-mwt-08}.
We reduce from the following variant of positive 1-in-3-SAT.
\begin{lemma}\label{lem:satvariant}
Planar positive 1-in-3-SAT with variables on a line and with every variable occurring in at least one clause on each side of the line and in at most three clauses is NP-hard.
\end{lemma}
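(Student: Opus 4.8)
The plan is to reduce from the Mulzer--Rote problem~\cite{mr-mwt-08}: planar positive 1-in-3-SAT whose instances come with a straight-line embedding in which all variables lie on a line $\ell$ and no edge crosses $\ell$. Given such an instance $\Phi$, I will build an equivalent instance $\Phi'$ of the same kind in which, in addition, every variable occurs in at most three clauses and has at least one incident clause on each side of $\ell$. The only tool is an \emph{equality gadget}: to force two variables $u$ and $v$ to be equal, introduce fresh variables $a,b$ and the clauses $(u,a,b)$ and $(v,a,b)$. As $1$-in-$3$ constraints these say $u+a+b=1=v+a+b$, hence $u=v$, and every value of $u$ extends to a satisfying assignment of the two clauses. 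The useful side effects are that $a$ and $b$ each occur in exactly two clauses, and that if $u,a,b$ (resp.\ $v,a,b$) are placed at three consecutive points of $\ell$ then the clause $(u,a,b)$ (resp.\ $(v,a,b)$) can be drawn as a small arc on either side of $\ell$; moreover the two clauses are forced onto opposite sides, since their arcs overlap, which automatically gives $a$ and $b$ one incident clause per side.

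Now replace each variable $v$, say occurring in clauses $C_1,\dots,C_t$ listed in the rotational order in which they attach to the point of $v$ (so all $C_j$ on one side of $\ell$ come first, then all on the other side). In a narrow strip around that point, place $t$ copies $v_1,\dots,v_t$, route $v_j$ to $C_j$, and tie the copies into one forced-equal cluster with equality gadgets: a path $v_1-v_2-\cdots-v_t$ if $v$ has clauses on both sides of $\ell$, and a cycle through $v_1,\dots,v_t$ alternating with fresh dummy copies if all of $v$'s clauses lie on one side (and, in the degenerate case $t=1$, a small triangle of two fresh dummy copies attached to $v_1$, since then there is nothing to balance $v_1$ against). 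In every case each copy lands in at most three clauses --- its original clause plus at most two gadget clauses. Finally choose, for each small gadget clause, which side of $\ell$ it is drawn on, so that every copy and every fresh variable gets an incident clause on each side; phrased as an orientation of the equality-gadget edges subject to prescribed in- and out-degree lower bounds, such a choice exists for the cluster topologies above. The strips are narrow and pairwise disjoint, all new wiring hugs $\ell$ inside its own strip --- with at most one arc per cycle or triangle spanning the strip, and such an arc nests outside every other arc on its side and outside every pre-existing arc of $\Phi$ --- and nothing crosses $\ell$, so $\Phi'$ is again a legal instance, of size polynomial in that of $\Phi$.

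Equivalence is immediate: the equality gadgets force all copies of a variable $v$ to share a value, so a satisfying assignment of $\Phi'$ restricts to one of $\Phi$, and conversely a satisfying assignment of $\Phi$ extends by copying values onto all copies and completing the fresh variables as above. The main obstacle is the embedding bookkeeping compressed into the second paragraph: one has to verify, strip by strip, that the side-choices for the little gadget clauses can be made so as to simultaneously (i) put the two clauses of each equality gadget on opposite sides (forced anyway by the overlap of their arcs), (ii) give every copy and fresh variable a clause on each side, and (iii) never force a gadget arc to cross one of the long edges $v_j\to C_j$ or a pre-existing arc of $\Phi$. Each of these is a small local feasibility question, and the proof amounts to checking that the cluster topologies above always make it solvable.
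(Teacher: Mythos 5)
Your proof is correct and follows the same strategy as the paper: reduce from Mulzer--Rote's planar positive 1-in-3-SAT with variables on a line, split each variable into a chain of copies forced equal by equality gadgets, and use the gadgets' freedom to give every variable a clause on each side while keeping occurrences at most three. The difference is in the gadget and the bookkeeping it entails. The paper reuses Mulzer--Rote's equality gadget, rearranged so that every auxiliary variable of the gadget already occurs in one clause on each side; as a result each interior chain variable automatically gets one gadget clause per side, no orientation argument is needed, and the only special case is a variable with a single occurrence, which is handled by attaching two \emph{identical} clauses $(x \vee a \vee b)$, one per side. Your elementary two-clause gadget $(u,a,b),(v,a,b)$ is logically sound (note the paper's single-occurrence fix is exactly your gadget with $u=v$), but because it hands each endpoint only one clause whose side is a choice, you must add the path/cycle/triangle case split and the edge-orientation feasibility argument, plus the embedding of a cluster-spanning arc in the cycle and triangle cases. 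Those deferred checks do go through --- the spanning arc can always be placed on the side carrying none of the cluster's original clause edges, and the forced orientation of that one gadget is compatible with the rest of the cycle --- but your phrase that it ``nests outside every pre-existing arc of $\Phi$'' is not quite the right justification (it would fail on the side where the original clause edges leave the strip); what saves you is that the spanning arc is never needed on that side. Net effect: your route is more self-contained (no reliance on the structure of the cited gadget) at the cost of a heavier case analysis; the paper's route is shorter but leans on the external gadget and a figure.
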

\eenplaatje {equality-gadget} {equality gadget $x \Leftrightarrow z$}
\begin{proof}
Consider the \emph{equality gadget} in Figure~\ref{fig:equality-gadget}. It enforces that $x$ and $z$ are equivalent. The equality gadget consists of the same clauses as the corresponding gadget in~\cite{mr-mwt-08}, but it arranges them such that any variable other than $x$ and $z$ occurs in a clause on each side of the line. Concatenating equality gadgets gives us a sequence of equivalent variables $x_0, \ldots, x_{k+1}$, where $x_0$ and $x_{k+1}$ occur in exactly one clause and the other variables occur in exactly two clauses, one clause on each side of the line. Given an instance of planar positive 1-in-3-SAT, we replace any variable that occurs in $k>1$ clauses by such a sequence, and use $x_1,\ldots,x_k$ to connect to one additional clause each. We connect any remaining variable $x$ that only occurs in one clause (like $x_0$ and $x_{k+1})$ to two new identical clauses $(x \vee a \vee b)$ and $(x \vee a \vee b)$ with additional variables $a$ and $b$, one on each side of the line.
\end{proof}

  \begin{figure}[htb]
    \centering
    \includegraphics[scale=0.5]{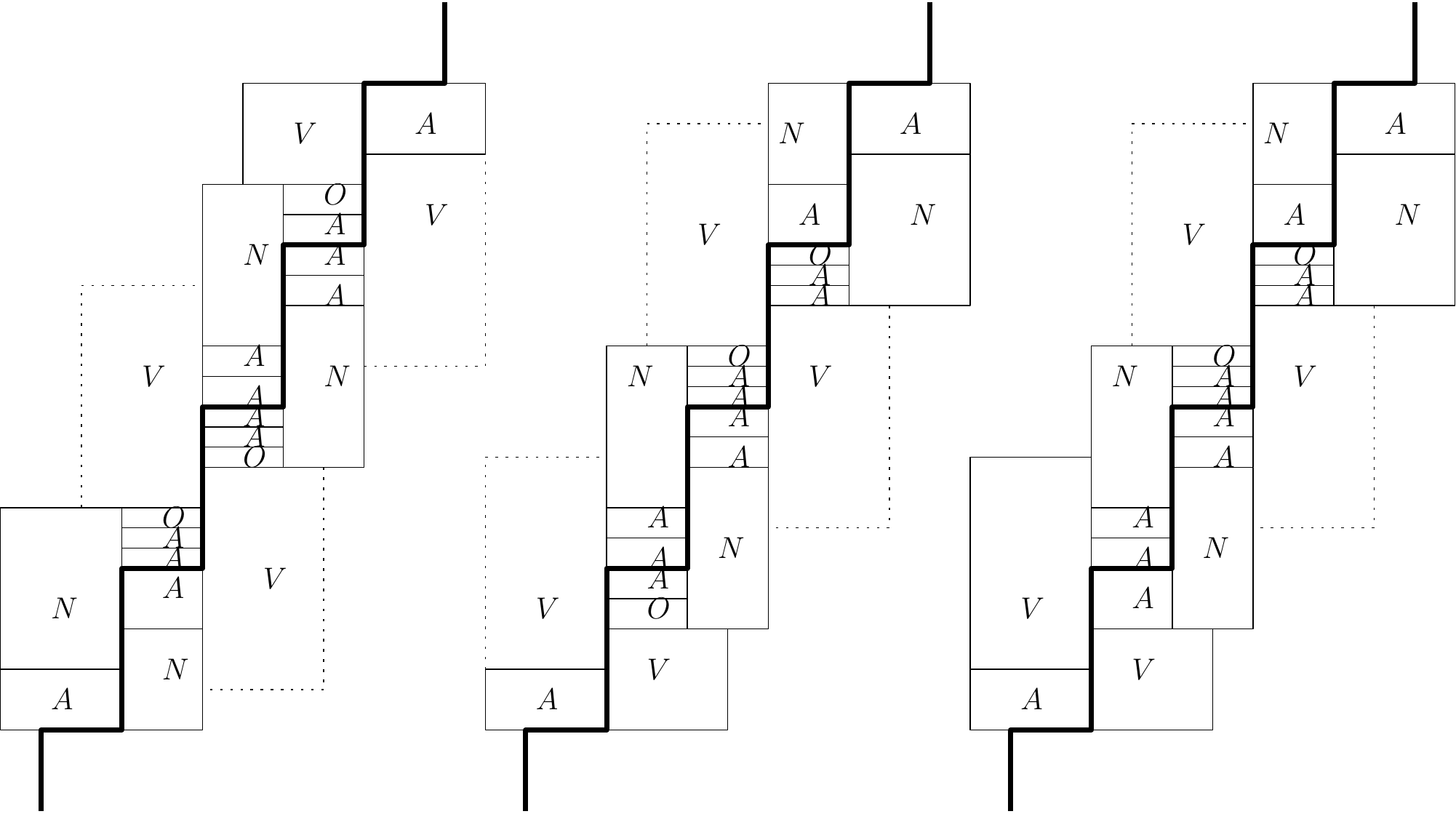}
    \caption{A set of rectangles representing a variable. Left: 1 occurrence above the line, 2 below; middle: 2 above, 1 below; right: 1 and 1.}
    \label{fig:variable-gadget}
  \end{figure}
In the reduction we have two top-level regions with a staircase boundary between them. For each variable we have a variable gadget consisting of a set of rectangles on both sides of the boundary. Figure~\ref{fig:variable-gadget} shows the three variable gadgets used. Which gadget is used, depends on the number of occurrences in clauses. From the $V$- and $N$-rectangles we can on each side of the boundary only keep an independent set (in terms of the adjacency graph) since any adjacent pair would induce a separating triangle due to the $A$-rectangles. A $V$-rectangle extends beyond its vertical and/or its horizontal dotted line; we call such a $V$-rectangle \emph{connecting}. $O$-rectangles are only placed opposite to those $V$-rectangles that are extended in this way. The complete boundary is filled with vertex gadgets.

A clause is represented by three adjacencies: for a clause on the left of the boundary, from the left variable gadget a $V$-rectangle extends upwards and from the right variable gadget a $V$-rectangle extends to the left, so that the two rectangles touch. From the middle vertex gadget a $V$-rectangle extends to the left and upwards, so that it touches the two other rectangles. The case to the right of the boundary is analogous. We do this for every clause. Finally any empty spaces are filled with rectangles greedily.

For the moment assume that for a variable either all $V$-rectangles (and no $N$-rectangles) are kept on the boundary (this corresponds to setting a variable to true) or all $N$-rectangles (and no $V$-rectangles) are kept (this corresponds to setting a variable to false). Additionally we keep as many $A$- and $O$- rectangles as possible. In the case that we keep all $V$-rectangles we can achieve 10 or 11 adjacencies depending on whether the variable occurs 2 or three times. Thus, we achieve 8 adjacencies plus the number of occurrences of the variable in clauses. In the case that we keep all $N$-rectangles we can achieve 8 adjacencies. Additionally, we get 1 adjacency for each pair of neighboring vertex gadgets. For the three variable gadgets involved in the same clause gadget, we can for at most one keep all $V$-variables since the clause gadget would otherwise yield a separating triangle. Thus if we have $m$ variables and $n$ clauses, the total number of adjacencies is $8m + m-1$ plus the number of occurrences of variables that have been set to true. Now if the original formula has a satisfying assignment with exactly one variable true per clause then the number of adjacencies we can achieve in this way is $9m -1 +n$, and if there is no such assignment the number of adjacencies is smaller.

It remains to show that we can indeed assume that for a vertex gadget we have only $V$-rectangles or only $N$-rectangles. First observe that as long as we keep all connecting $V$-rectangles on the boundary, there is no advantage in dropping any of the remaining ones. It remains to prove that we do not get more than 8 adjacencies if we do not choose to keep all connecting $V$-rectangles on the boundary; this implies that such a configuration has no advantage over the one with all $N$-rectangles, thus we can replace it by the later.

We refer to external adjacencies by the symbols of the rectangles, e.g., we call an adjacency between a $V$- and an $A$-rectangle a $V$-$A$ adjacency.
We now go through all cases. If we keep the $N$-rectangles on 1 side and the $V$-rectangles on the other side then we get no $V$-$V$ or $N$-$N$ adjacencies, 3 $V$-$A$, 3 $N$-$V$ and at most 2 $V$-$O$ adjacencies, thus at most 8. There is one configuration in which we keep a pair of externally-adjacent $V$-rectangles and a pair of externally-adjacent $N$-rectangles. In this case we have 1 $V$-$V$, 1 $N$-$N$, 2 $V$-$A$, 2 $N$-$A$, and at most 2 $V$-$O$ adjacencies, thus at most 7. If we keep $3$ $V$-rectangles, but not all connecting ones, then we loose at least 3 adjacencies, so we keep at most 8. All other cases give subsets of the cases above, and leave us with even fewer (at most 7) adjacencies. Thus, the assumption was valid. From this the following theorem follows.

    \begin {theorem}
      Given an input subdivision, 
      finding a rectangular subdivision that respects the global layout, preserves all internal adjacencies, and preserves as many bottom-level external adjacencies as possible is NP-hard.
    \end {theorem}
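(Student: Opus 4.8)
The plan is to reduce from the restricted form of Planar positive 1-in-3-SAT established in Lemma~\ref{lem:satvariant}: variables on a line, each variable in at most three clauses, and each variable in at least one clause on each side of the line. Given such a formula $\varphi$ with $m$ variables and $n$ clauses, I would build in polynomial time a two-level subdivision with exactly two top-level regions separated by a long staircase boundary, and show that $\varphi$ has a 1-in-3 satisfying assignment if and only if the optimal treemap preserves at least $9m-1+n$ external bottom-level adjacencies. Since the constructed instance has a fixed global layout and only two top-level regions, this already yields the theorem, and \emph{a fortiori} the NP-hardness of the general problem with arbitrarily many top-level regions.

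First I would place the variable gadgets of Figure~\ref{fig:variable-gadget} consecutively along the staircase, one per variable, picking the appropriate type (one vs.\ two occurrences above/below the line) according to how the variable is used; Lemma~\ref{lem:satvariant} guarantees at most three occurrences, balanced across the line, so one of the three gadget types always applies. The $A$-rectangles force that, on each side of the boundary, only an independent set of the $V$- and $N$-rectangles can be kept, since two adjacent survivors would create a separating triangle. Each clause is encoded by routing three \emph{connecting} $V$-rectangles from its three variable gadgets so that they mutually touch, with $O$-rectangles placed only opposite the extended $V$-rectangles; planarity of the SAT instance together with the ``variables on a line, at most one clause per side'' structure is exactly what lets all clause gadgets be embedded disjointly. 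Remaining space is filled greedily with dummy cells.

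The heart of the proof is a local analysis of a single variable gadget showing that an optimal solution may be normalized so that each variable keeps \emph{either} all of its $V$-rectangles (encoding \emph{true}) \emph{or} all of its $N$-rectangles (encoding \emph{false}). Keeping all connecting $V$-rectangles is never worse than keeping a strict subset, and a finite case check --- $N$'s on one side with $V$'s on the other; a mixed pair of externally adjacent $V$'s together with a pair of $N$'s; three $V$'s but not all the connecting ones; and the remaining sub-cases --- shows every mixed configuration yields at most $8$ external adjacencies inside the gadget, no more than the all-$N$ state. In the \emph{true} state a gadget contributes $8$ plus its number of clause occurrences; in the \emph{false} state it contributes $8$; each of the $m-1$ consecutive pairs of gadgets contributes one more adjacency; and a clause gadget lets at most one of its three variables be \emph{true} (a second would again force a separating triangle), contributing its three adjacencies precisely when exactly one incident variable is \emph{true}.

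Summing up, the maximum number of preserved external adjacencies equals $8m+(m-1)$ plus the number of (variable, clause) incidences at \emph{true} variables; because at most one \emph{true} variable is allowed per clause, this last quantity is at most $n$, with equality exactly when every clause has exactly one \emph{true} variable, i.e.\ when $\varphi$ is 1-in-3 satisfiable. Hence the optimum is $9m-1+n$ iff $\varphi$ is satisfiable and strictly smaller otherwise, completing the reduction. The step I expect to be the main obstacle is precisely the case analysis certifying that mixed gadget states are never advantageous: it is what legitimizes reading the gadgets as Boolean variables, and it depends on pinning down the exact adjacency counts ($V$-$A$, $N$-$V$, $V$-$O$, and so on) for every partial configuration, including the interaction with the greedily filled dummy regions.
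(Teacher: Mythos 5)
Your proposal follows the paper's own proof essentially verbatim: the same reduction from the restricted planar positive 1-in-3-SAT of Lemma~\ref{lem:satvariant}, the same staircase boundary between two top-level regions, the same variable/clause gadgets with $V$-, $N$-, $A$-, and $O$-rectangles, the same normalization via case analysis showing mixed states yield at most $8$ adjacencies, and the same threshold $9m-1+n$. The case analysis you flag as the main remaining obstacle is exactly the one the paper carries out explicitly, and your enumeration of its cases matches the paper's.
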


\subsection {Upper bound}

  We now show that it is sometimes not possible to preserve more than a factor $1/4$ of the external adjacencies.


  We will construct an example with $4h+7$ boundary rectangles (and therefore $4h+6$) external adjacencies between two regions. To be more precise, we will construct a graph which we show is the dual graph of a pair of rectangle complexes.
  Figure~\ref {fig:upper-bound-4} illustrates the construction.
  The graph consists of two pieces (one for each rectangle complex). The top piece has $7$ vertices, $3$ of which are isolated and $4$ of which are connected by a maximal outerplanar graph.
  The bottom piece consists of $4h$ vertices, in four groups of $h$. The first and third groups of vertices are all isolated, while the vertices of the second and fourth groups are pairwise connected. Finally, we add a complete planar bipartite graph between the two groups of vertices by connecting each group of $h$ vertices to one of the $4$ connected vertices in the top graph, and filling the gaps with $6$ additional edges.

  \begin {claim}
    We cannot preserve more than $h+6$ edges in this construction.
  \end {claim}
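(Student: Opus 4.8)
The plan is to argue directly from the adjacency-graph characterisation set up just above the claim. A solution corresponds to choosing a subset $S_{\cal T}$ of boundary cells of the top complex and a subset $S_{\cal B}$ of boundary cells of the bottom complex; the preserved external adjacencies are exactly the external edges of the adjacency graph with both endpoints in $S_{\cal T}\cup S_{\cal B}$; and each of $S_{\cal T}$ and $S_{\cal B}$ must be independent with respect to the remaining internal adjacency edges, since otherwise identifying all external adjacencies of the opposite boundary with a single vertex creates a separating triangle. So it suffices to bound the number of external edges spanned by such a pair of independent sets, and the construction is designed so that this bound is $h+6$.

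First I would organise the $4h+6$ external edges into $4h$ \emph{spoke} edges and $6$ \emph{bridge} edges. The spoke edges fall into four bundles $B_1,\dots,B_4$ of $h$ edges each: all edges of $B_i$ are incident to a single top ``hub'' cell $a_i$ and to the $h$ cells of the $i$-th group $G_i$ of bottom boundary cells. Hence if $a_i\notin S_{\cal T}$ then no edge of $B_i$ survives, and in any case the number of surviving edges of $B_i$ is at most $|S_{\cal B}\cap G_i|$. Two of the groups (the ``isolated'' ones) have pairwise non-adjacent cells, so the corresponding $B_i$ can in principle be preserved in full; for the other two groups the $h$ cells are pairwise internally adjacent, so even after deleting the boundary-consecutive edges the remaining internal graph on $G_i$ admits only constant-size independent sets, and those two bundles contribute only $O(1)$ preserved spoke edges, whatever hubs are chosen.

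Next I would use the top-complex pattern. The four hubs are joined by a maximal outerplanar graph, so after deleting boundary-consecutive internal edges a nonempty internal graph remains on $\{a_1,\dots,a_4\}$; inspecting the (essentially unique) outerplanar graph on four vertices shows that the example can be laid out so that the two hubs attached to the isolated bottom groups are adjacent in this remaining graph. Therefore $S_{\cal T}$ contains at most one of those two hubs, so at most one isolated bundle is preserved in full ($h$ edges) and the other contributes $0$. Combining with the previous paragraph, everything except one full bundle is $O(1)$; the claim amounts to showing this $O(1)$ is at most $6$.

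Finally I would close with a finite case analysis over $S_{\cal T}\cap\{a_1,a_2,a_3,a_4\}$: once the selected hubs are fixed, the choices of the three bridge endpoints in the top complex and of the clique-group representatives in the bottom complex essentially decouple, and I would enumerate them, discarding any combination that violates an independence constraint or realisability of the two rectangular complexes, and taking the best remaining one. The hard part will be exactly this bookkeeping: bounding each ingredient in isolation only gives $h+O(1)$ with a larger constant, and obtaining the tight value $6$ requires pinning down precisely which combinations of bridge endpoints, hubs and clique representatives are simultaneously compatible, which I would record in a small table indexed by the selected hubs. Assembling the cases then yields that at most $h+6$ external adjacencies can be preserved, as claimed.
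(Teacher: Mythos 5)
Your proposal has a genuine gap, and it stems from a misreading of the construction. In the paper, the first and third bottom groups are internally edge-free, while the second and fourth groups are joined \emph{to each other} by internal adjacencies: each vertex of the second group is paired with a vertex of the fourth group, so that from each such pair at most one vertex can lie on the boundary. They are not two internal cliques, and indeed they could not be, since the bottom piece must remain (outer)planar and realizable as a rectangle complex. Under the correct reading, the second and fourth bundles together can contribute up to $h$ preserved spoke edges (select the two middle hubs, which are non-adjacent in the maximal outerplanar graph $K_4$ minus an edge, and take one endpoint from each matched pair), not the $O(1)$ you assert. So your case split --- ``one isolated bundle in full, everything else constant'' --- misses the actual extremal configuration, and your overall bound of $h+O(1)$ does not specialize to $h+6$ without covering this case.

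The second problem is that you defer the decisive step to an unexecuted ``small table'' of compatible hubs, bridge endpoints and clique representatives. No such bookkeeping is needed, and the worry that componentwise bounds only give a larger constant is unfounded: the paper's argument is simply that the spoke edges contribute at most $h$ in \emph{every} case (hub $1$ or hub $4$ excludes all other hubs and yields only its own isolated bundle of $h$ edges; hubs $2$ and $3$ together yield at most half of the $2h$ cross-matched edges), and the remaining $6$ edges are bounded by $6$ because there are only $6$ of them. Your decomposition into four bundles plus six bridges and your use of the hub-adjacency structure of the top piece are on the right track, but the proof as proposed neither establishes the spoke bound of $h$ nor completes the constant.
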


  \begin {proof}
    Recall that we need to take an independent set in both the top and bottom outerplanar graph.
    This implies that in the top graph, if we keep the first or the fourth vertex, we cannot keep any of the other three. Therefore, we can either keep the complete first or the third group of $h$ edges, or a combination of edges in the second and fourth group together. However, each edge in the second group is connected to an edge in the fourth group via an internal adjacency in the bottom graph, so we can keep at most half of the edges in these groups together: at most $h$.
    Since there are only $6$ edges not in a group, we can keep at most $h+6$ edges.
  \end {proof}

  If desired, we can create another copy of the construction and place it upside down to have a non-constant number of vertices at the top as well as at the bottom.

\eenplaatje [scale=0.8] {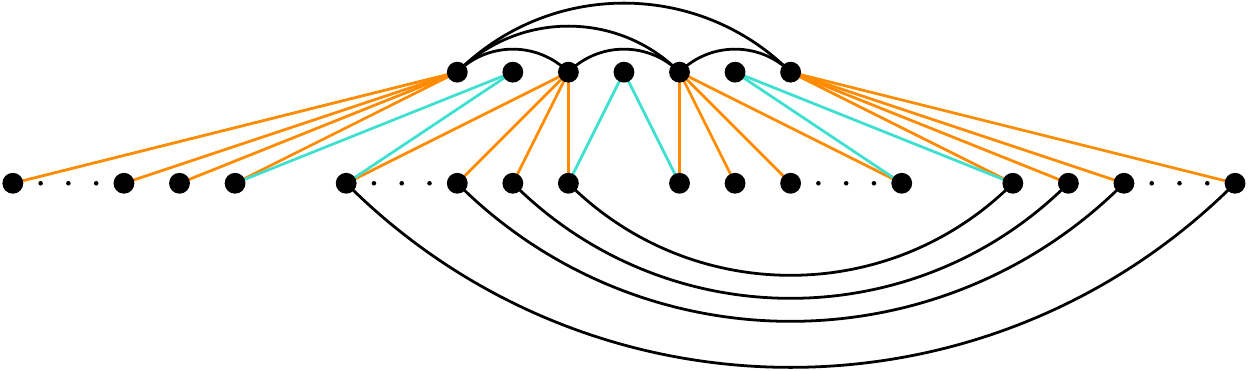} {Illustration of the construction showing that we cannot keep more than a quarter of the external adjacencies.
Black edges indicate internal adjacencies; orange and blue edges are external adjacencies. Blue edges are incident to an isolated vertex in the top graph.}

    \begin {theorem}
      Given an input subdivision, 
      there generally exists no rectangular subdivision that respects the global layout, preserves all internal adjacencies, and preserves more than $1/4$th of the bottom-level external adjacencies.
    \end {theorem}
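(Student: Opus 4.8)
The plan is to turn the sketch of Figure~\ref{fig:upper-bound-4} into a precise family of instances $\cal T_h$ and then to establish its two nontrivial ingredients: that the described abstract dual graph is realized by a genuine two-level treemap, and that the counting in the claim above is correct. For fixed $h$, let $\cal T_h$ be the treemap whose global layout is the root box split by a staircase into two orthogonally convex regions, and whose bottom level realizes the stated graph: a top complex with $7$ boundary cells ($3$ isolated, $4$ forming a maximal outerplanar graph), a bottom complex with $4h$ boundary cells in four consecutive groups of $h$ (groups one and three internally independent, groups two and four internally paired), together with the $4h$ cross edges of the complete planar bipartite graph joining each bottom group to one of the four connected top cells, plus $6$ further cross edges filling the remaining gaps.

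First I would establish geometric realizability, which I expect to be the main obstacle; everything else is bookkeeping. The global layout is a rectangular layout by construction, so it suffices to produce the two rectangle complexes with the prescribed dual graphs and to check that gluing them along the staircase yields exactly the claimed external adjacencies and no others. I would do this either by an explicit drawing along the staircase boundary --- stacking the internally paired cells (and the outerplanar top cells) so their internal adjacencies line up, separating the isolated cells by ``blocking'' cells interior to the complex, and interlocking the two staircases so as to produce precisely $4h+6$ crossings --- or, more abstractly, by verifying that the extended dual of each complex is triangulated with a quadrilateral outer face and free of separating triangles, so that Lemma~\ref{lem:no-sep-tri} supplies a rectangular realization of each complex; in the latter route one must still place the two realizations so the induced global layout is the prescribed split and the cross-boundary adjacencies are exactly the intended set.

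Next I would recall the combinatorial bound, which is exactly the claim above. Because the output respects the global layout and preserves all internal adjacencies, the boundary cells retained on each side must form an independent set with respect to the retained internal-adjacency edges of that side: two retained adjacent cells would create a separating triangle through the vertex obtained by contracting the opposite boundary. In the top outerplanar graph this independence condition allows us either to retain one whole block of $h$ cross edges incident to a single top cell, or else to retain cross edges going into two of the bottom groups at once; but in the latter case those two groups are the internally paired ones, so at most $h$ of their $2h$ cross edges can be kept. Adding the $6$ cross edges that lie in no group, the total number of external adjacencies that can survive is at most $h+6$.

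Finally, combining the two ingredients, $\cal T_h$ has $4h+6$ external bottom-level adjacencies while any rectangular output that respects the global layout and preserves all internal adjacencies preserves at most $h+6$ of them. Since $(h+6)/(4h+6)\to \tfrac14$ as $h\to\infty$, no bound better than a $1/4$ fraction can hold in general: for every $c>\tfrac14$ there is an $h$ for which no valid output preserves a $c$-fraction of the external adjacencies, which is the asserted upper bound. If one additionally wants unboundedly many boundary cells on the top side as well, gluing a second, upside-down copy of $\cal T_h$ changes both counts by the same additive constant and leaves the limiting ratio unchanged.
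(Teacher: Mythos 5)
Your proposal is correct and follows essentially the same route as the paper: the same family of instances built from the graph of Figure~\ref{fig:upper-bound-4}, the same independent-set counting argument giving at most $h+6$ of the $4h+6$ external adjacencies, and the same limiting ratio $(h+6)/(4h+6)\to 1/4$. Your added attention to geometric realizability of the dual graph is a reasonable extra precaution, but it is not a divergence from the paper's argument, which asserts (via the figure) the same realization.
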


\subsection{Algorithm}
We first describe an algorithm for two top-level regions. The connected components
  of internal adjacency edges form outerplanar graphs on which we can
  solve the maximum weight independent set problem in linear time exactly (where
  the weight of a vertex is its degree in terms of external adjacency
  edges). We first solve the maximum weight independent set problem for one of the sides of the boundary exactly and then solve it for the other side using only the adjacencies with rectangles from the maximum weight independent set.

   By three-coloring the outerplanar graph we see that the weight of the independent set is at least a third of the total weight since the maximum weight of a colour class is a lower bound for the weight of the independent set. Applying the same argument to the other outerplanar graph, shows that we preserve at least a third of the remaining adjacencies. Thus overall we keep at least $1/9$th of the external adjacencies. Furthermore, the weight of the independent set for the first side is an upper bound on the number of external adjacencies that we can preserve. Since we preserve a third of this weight, our algorithm is a $1/3$-approximation.

For more than two top-level regions the choices of which rectangles to place on a boundary are not independent. Instead of solving the problem for the boundaries between two regions we solve it for the line segments of the global layout (with possibly more than one region on each of the sides of the line segment). The choices of which rectangles to keep adjacent to line segments are also not independent, but if we only consider horizontal or only vertical line segments they are. By optimizing only for horizontal or only for vertical line segments, we again loose at most a factor of $1/2$ in terms of the number of adjacencies preserved. Thus, overall we can preserve at least $1/18$th of the external adjacencies and obtain a $1/6$-approximation. We can compute a corresponding rectangular subdivision in linear time~\cite{kh-tafrd-93}.

   \begin {theorem}
      Given an input subdivision, 
      we can find a rectangular subdivision that respects the global layout, preserves all internal adjacencies, and preserves at least $1/18$th of the bottom-level external adjacencies in $O (n)$ time.
    \end {theorem}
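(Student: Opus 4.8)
The plan is to prove the two-region case first --- where the reduction described above turns the problem into a weighted independent-set question on boundary cells --- and then to lift it to arbitrarily many top-level regions by splitting the separating segments of the global layout by orientation.

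\emph{Two regions.} Let $\mathcal{R}$ and $\mathcal{S}$ be the two complexes sharing a staircase boundary. Following the reduction, I would form the adjacency graph on the boundary cells, delete the internal edges between cells consecutive along a boundary traversal, and weight each remaining vertex by its number of external-adjacency edges. The first thing to verify is that, on each side, every connected component of the surviving internal-adjacency edges is outerplanar: the boundary cells of a complex occur in cyclic order, an internal adjacency between two non-consecutive boundary cells is realised by a shared edge segment in the interior of the complex, and two such segments cannot cross (that would force four cells to meet at a point, which a layout forbids); hence these adjacencies are non-crossing chords of the boundary cycle, the reduced graph is outerplanar, and so is every component. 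On outerplanar graphs the maximum-weight independent set can be computed exactly in linear time. Any two surviving-adjacent boundary cells both kept on the boundary of $\mathcal{R}$, together with the vertex obtained by contracting the external adjacencies on that side, would form a separating triangle in the output dual; so the cells kept on one side must be independent in the reduced graph, and conversely, by Lemma~\ref{lem:no-sep-tri} (and Lemmas~\ref{lem:extension} and~\ref{lem:contiguos}), any such choice is realisable.

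\emph{Algorithm and ratio for two regions.} I would compute a maximum-weight independent set $S_{\mathcal{R}}$ on the $\mathcal{R}$-side, then re-weight the $\mathcal{S}$-side counting only external adjacencies whose far endpoint lies in $S_{\mathcal{R}}$, take a maximum-weight independent set $S_{\mathcal{S}}$ there, and output the layout realising exactly the adjacencies between $S_{\mathcal{R}}$ and $S_{\mathcal{S}}$. Since an outerplanar graph is $3$-colourable, its heaviest colour class is an independent set of weight at least one third of the total; hence $w(S_{\mathcal{R}}) \ge W/3$ with $W$ the number of external adjacencies, and likewise $w(S_{\mathcal{S}}) \ge w(S_{\mathcal{R}})/3 \ge W/9$, and the number of preserved adjacencies equals $w(S_{\mathcal{S}})$. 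For the approximation bound, in any solution the cells kept on the $\mathcal{R}$-side form an independent set and each preserved external adjacency is incident to exactly one of them, so the optimum is at most $w(S_{\mathcal{R}})$; we preserve at least $w(S_{\mathcal{R}})/3$, a $1/3$-approximation. All of this runs in linear time.

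\emph{General case and main obstacle.} For $k>2$ regions the choices across different region--region boundaries interact, but they do not interact within a single orientation: split the maximal straight segments of the global layout into horizontal and vertical ones; running the two-region procedure on a horizontal segment uses only bottom-boundary cells of the regions above it and top-boundary cells of those below it, and since no cell of a complex lies on both its top and its bottom boundary, the horizontal subproblems are mutually independent, as are the vertical ones. Solve all horizontal segments, independently solve all vertical ones, and keep whichever family preserves more adjacencies; each external adjacency is charged to exactly one segment, so the better family carries at least half of all external adjacencies and at least half of the optimum, and combining with the $1/9$ and $1/3$ factors yields at least $W/18$ adjacencies preserved and a $1/6$-approximation. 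With the combinatorial choice of boundary cells fixed and a separating-triangle-free dual in hand, a concrete rectangular subdivision is realised in linear time by the standard algorithm~\cite{kh-tafrd-93}, so the total running time is $O(n)$. The delicate part is exactly this orientation split: arguing that the per-segment problems are genuinely independent so that the two-region analysis applies segment by segment, that discarding one orientation class loses only a factor of two, and that the assembled per-segment choices keep the global no-separating-triangle condition so that the whole subdivision remains realisable; the outerplanarity claim and the exact linear-time independent-set routine are the other points needing care, but they are routine once the non-crossing-chord picture is in place.
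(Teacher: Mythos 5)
Your proposal is correct and follows essentially the same route as the paper: maximum-weight independent sets on the outerplanar graphs of surviving internal adjacencies, solved side by side with the $1/3$ loss from three-colourability giving $1/9$ for two regions, and the horizontal/vertical split of the global layout's segments giving the extra factor $1/2$ for the general case. The additional detail you supply (the non-crossing-chord argument for outerplanarity and the realisability via Lemma~\ref{lem:no-sep-tri}) is consistent with, and slightly more explicit than, the paper's own exposition.
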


    \begin {corollary}
      Given an input subdivision, 
      let $s$ be the maximal number of internal adjacencies preserved in any rectangular subdivision that respects the global layout, preserves all internal adjacencies.
      We can find a rectangular subdivision that respects the global layout, preserves all internal adjacencies, and preserves at least $1/6 s$ bottom-level external adjacencies in $O (n)$ time.
    \end {corollary}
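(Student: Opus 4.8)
The plan is to observe that this Corollary is just the explicit $1/6$-approximation statement already established in the discussion preceding it, with $s$ playing the role of the optimum (the maximum number of external bottom-level adjacencies that any rectangular subdivision respecting the global layout and preserving all internal adjacencies can realize). So I would reuse the same algorithm and make the ratio against $s$ precise. All internal adjacencies are kept throughout by Observation~1, so the only work is the external count.

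First I would separate the line segments of the global layout into the horizontal family and the vertical family and treat each family by itself. Fix the horizontal family and a single horizontal segment $\ell$, with (possibly several) regions above and below it. After deleting the internal adjacency edges joining consecutive boundary cells, the cells wanting a place on $\ell$ form, on each side, a disjoint union of outerplanar graphs, and any feasible solution must pick an independent set on each side (an adjacent pair would create a separating triangle). Compute a maximum-weight independent set on the top side exactly in linear time, weighting a cell by the number of external adjacency edges across $\ell$ incident to it; let $W_\ell$ be its weight. Then $W_\ell$ upper-bounds the number of $\ell$-adjacencies realizable by \emph{any} solution, in particular by the one achieving $s$. Next, restricting to edges incident to the chosen top cells, solve a maximum-weight independent set on the bottom side; three-colouring the bottom outerplanar graph and taking the heaviest colour class shows this set has weight at least $W_\ell/3$. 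The chosen cells on the two sides can then all be simultaneously extended and matched across $\ell$ (the contiguity/non-conflict guarantees are exactly those used in Lemma~\ref{lem:contiguos} and the given-global-layout construction), so we realize at least $W_\ell/3$ external adjacencies across $\ell$. Because distinct horizontal segments impose independent choices, summing over $\ell$ yields a subdivision realizing at least $\tfrac13\sum_\ell W_\ell \ge \tfrac13\,s_h$ external adjacencies, where $s_h$ is the number of horizontally-oriented external adjacencies in an optimal solution; running the same routine on the vertical family gives $\tfrac13\,s_v$, and we return whichever is better. Since every external adjacency preserved by an optimal solution is realized either horizontally or vertically, $s_h+s_v\ge s$, hence $\max(s_h,s_v)\ge s/2$ and the output preserves at least $\tfrac13\cdot\tfrac12\,s=\tfrac16 s$ of them. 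Finally, a rectangular subdivision realizing the chosen dual graph (which has no separating triangles, by construction) is built in $O(n)$ time~\cite{kh-tafrd-93}, and outerplanar maximum-weight independent set and three-colouring are both linear, so the whole procedure runs in $O(n)$ time.

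The main obstacle is not a new calculation but making the two supporting facts airtight: (i) that $W_\ell$ genuinely upper-bounds what \emph{every} solution — not merely our algorithm — can achieve across $\ell$, which relies on the separating-triangle obstruction forcing an independent set on each side of the boundary; and (ii) that horizontal segments are mutually independent when there are more than two top-level regions, i.e., no boundary cell is shared between two horizontal segments in a way that couples their choices. Both were argued in the text preceding the Corollary, so once they are in hand the Corollary follows by substituting $s$ for the optimum in the $1/6$-approximation bound.
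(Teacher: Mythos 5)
Your proposal is correct and follows the paper's own argument essentially step for step: exact maximum-weight independent set on one side of each boundary segment (whose weight upper-bounds what any feasible solution can realize there), a three-colouring argument giving a $1/3$ fraction on the other side, and the horizontal/vertical split costing a further factor $1/2$ when there are more than two top-level regions, all realized via a separating-triangle-free dual graph in linear time. You merely make explicit the bookkeeping ($\sum_\ell W_\ell \ge s_h$ and $\max(s_h,s_v)\ge s/2$) that the paper leaves implicit, so no further commentary is needed.
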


\section* {Acknowledgements}

This research was initiated at MARC 2009. We would like to thank Jo Wood for proposing this problem, and all participants for sharing their thoughts on this subject.

D.~E. is supported by the National Science
Foundation under grant 0830403. D.~E. and M.~L. are supported by the
U.S. Office of Naval Research under grant N00014-08-1-1015. M.~N. is
supported by the German Research Foundation (DFG) under grant
NO 899/1-1. R.~I.~S. is supported by the Netherlands Organisation for Scientific Research (NWO).

  \bibliographystyle{abbrv}
  \bibliography{refs}

\end{document}